\RequirePackage{fix-cm}
\documentclass{vldb-no-cprt}          % twocolumn
%\smartqed  % flush right qed marks, e.g. at end of proof
% \journalname{myjournal}
%\usepackage[pass,letterpaper]{geometry}
\newtheorem{theorem}{Theorem}
\newtheorem{definition}[theorem]{Definition}
\usepackage{times}
\usepackage{mathptmx}      % use Times fonts if available on your TeX system
\usepackage{graphicx}
\usepackage{flushend}
\usepackage{subcaption}
\usepackage{microtype}
\usepackage{ifthen}
\usepackage{epstopdf}
\usepackage[usenames,dvipsnames]{xcolor}
\usepackage{listings}

\newcommand\thepapertitle{Efficiently making (almost) any concurrency control mechanism serializable}
\newcommand\thepaperkeywords{Concurrency control, serializability, serial safety net, read optimization, read-mostly transactions, ERMIA, main-memory databases}
\usepackage{float}
\lstset {
language=Python,
basicstyle=\footnotesize\tt,
keywordstyle=\ttfamily\bfseries\color{blue},
commentstyle=\color{OliveGreen},
numbers=left,
numberstyle=\footnotesize,
numbersep=5pt,
tabsize=2,
columns=flexible,
gobble=0,
stepnumber=3,
xleftmargin=12pt,
escapeinside={@*}{*@},
morekeywords={}
}

\usepackage[
  setpagesize=false,
  pageanchor=false,
  plainpages=false,
  pdfpagelabels=false,
  bookmarks,
  bookmarksnumbered,
  pdfborder={0 0 0},  %removes outlines around hyper links in online display
  colorlinks=true,
%  hidelinks=true,
  linkcolor=blue,
  citecolor=blue,
%  pdfpagelayout=TwoPageRight,
  pdftitle={\thepapertitle},
  pdfauthor={Tianzheng Wang},
%  pdfsubject={},
  pdfkeywords={\thepaperkeywords},
]{hyperref}

% must follow listings?
\floatstyle{ruled}
% must follow hyperref, listings, and float
\usepackage{algorithm}
\usepackage[noend]{algorithmic}
\usepackage{cite}
\usepackage{mathtools}
\usepackage{centernot}
\usepackage{balance}
\newcommand{\rdep}{\xleftarrow{\mathit{b}}}
\newcommand{\rdepstar}{\xleftarrow{\mathit{b*}}}
\newcommand{\fdep}{\xleftarrow{\mathit{f}}}

\newcommand{\xdep}{\leftarrow}
\newcommand{\txdepimpl}[3]{\xleftarrow{\mathit{#1{:}#2#3}}}
\newcommand{\txdep}[3][]{%
  \ifthenelse{\equal{#1}{}}
  {\txdepimpl{#2}{}{#3}}
  {\txdepimpl{#2}{#1{:}}{#3}}
}
\newcommand{\rwdep}[1][]{\txdep[#1]{r}{w}}
\newcommand{\wrdep}[1][]{\txdep[#1]{w}{r}}
\newcommand{\wwdep}[1][]{\txdep[#1]{w}{w}}
\newcommand{\wxdep}[1][]{\txdep[#1]{w}{x}}

%\widowpenalty10000
%\clubpenalty10000

\begin{document}
\title{Efficiently making (almost) any\\concurrency control mechanism serializable}
\author{
\alignauthor
$
\begin{array}{cccc}
  \mbox{Tianzheng Wang} & \mbox{Ryan Johnson} & \mbox{Alan Fekete} & \mbox{Ippokratis Pandis}\\
  \mbox{\affaddr{University of Toronto}} &
  \mbox{\affaddr{LogicBlox}} &
  \mbox{\affaddr{University of Sydney}} &
  \mbox{\affaddr{Amazon Web Services}}\\
  \mbox{\sf tzwang@cs.toronto.edu} &
  \mbox{\sf ryan.johnson@logicblox.com} &
  \mbox{\sf alan.fekete@sydney.edu.au} &
  \mbox{\small\sf ippo@amazon.com}
\end{array}
$
}

%\date{}%Received: date / Accepted: date}
\maketitle

\begin{abstract}
Concurrency control (CC) algorithms must trade off strictness for performance.
In particular, serializable CC schemes generally pay higher cost to prevent
anomalies, both in runtime overhead such as the maintenance of lock tables, and
in efforts wasted by aborting transactions. We propose the serial safety net
(SSN), a serializability-enforcing certifier which can be applied on top of
various CC schemes that offer higher performance but admit anomalies, such as
snapshot isolation and read committed.  The underlying CC mechanism retains
control of scheduling and transactional accesses, while SSN tracks the resulting
dependencies. At commit time, SSN performs a validation test by examining only
\textit{direct} dependencies of the committing transaction to determine whether
it can commit safely or must abort to avoid a potential dependency cycle.

SSN performs robustly for a variety of workloads. It maintains the
characteristics of the underlying CC without biasing toward a certain type of
transactions, though the underlying CC scheme might.  Besides traditional OLTP
workloads, SSN also efficiently handles heterogeneous workloads which include a
significant portion of long, \textit{read-mostly} transactions. SSN can avoid
tracking the vast majority of reads (thus reducing the overhead of
serializability certification) and still produce serializable executions with
little overhead. The dependency tracking and validation tests can be done
efficiently, fully parallel and latch-free, for multi-version systems on modern
hardware with substantial core count and large main memory.

We demonstrate the efficiency, accuracy and robustness of SSN using extensive
simulations and an implementation that overlays snapshot isolation in ERMIA, a
memory-optimized OLTP engine that supports multiple CC schemes. Evaluation
results confirm that SSN is a promising approach to serializability with robust
performance and low overhead for various workloads.
\end{abstract}

\section{Introduction}
Concurrency control (CC) algorithms interleave read and write requests from 
multiple users simultaneously, while giving the (perhaps imperfect) illusion 
that each transaction has exclusive access to the data. Serializable CC 
mechanisms generate concurrent transaction executions that are equivalent to 
\textit{some} serial ones.
This is desirable for users, because serializable executions
never have anomalies (e.g., lost update and write skew) and can preserve 
integrity constraints over the data. Enforcing a cycle-free transaction 
dependency graph is a necessary condition to achieve serializability, 
and is the focus of this work.\footnote{Phantom protection,
as we will discuss later in Section~\ref{sec:phantoms}, is another necessary, 
but largely orthogonal issue.}
Some CC schemes---such as two-phase locking (2PL) and serializable snapshot 
isolation (SSI)~\cite{crf09}---forbid all dependency cycles to guarantee serializability, 
but in doing so they also forbid many valid serializable schedules.

Traditional serializable CC schemes have been either pessimistic or optimistic.
In today's environment of massively parallel, large main memory hardware, 
it is common for the working set---or even the whole database---to fit in main memory. 
I/O operations are completely out of the critical path. 
Existing pessimistic scheme implementations often scale poorly in this situation, 
due to physical contention (e.g., on centralized lock tables~\cite{JohnsonPHAF09,PandisJHA10}). 
Lightweight optimistic concurrency control (OCC)~\cite{KungR81} is favored in 
many recent memory-optimized systems \cite{FOEDUS,hekaton,silo}, 
but OCC is known to be unfriendly to heterogeneous workloads that 
have a significant amount of analytical operations and \textit{read-mostly}
transactions \cite{RobustCC,ERMIA}.
Considering the performance impact of both kinds of serializable CC, 
many designs have non-serializable execution as the common case.
For example, although serializable SI (SSI)~\cite{crf09} has 
been implemented in PostgreSQL to ensure full serializability \cite{pg12}, 
Read Committed (RC) is still the default isolation level in PostgreSQL for 
performance reasons \cite{Pgsql-Isolation}, 
and a similar default is found in most widely-used database systems.
Sometimes there is no available isolation level that guarantees serializability. 
Whenever an application uses transactions that may not be serializable, 
data corruption is a risk,
so our focus is on guaranteeing serializability while reducing the
performance degradation as much as possible.

\begin{figure}
\centering
\includegraphics[width=0.98\columnwidth]{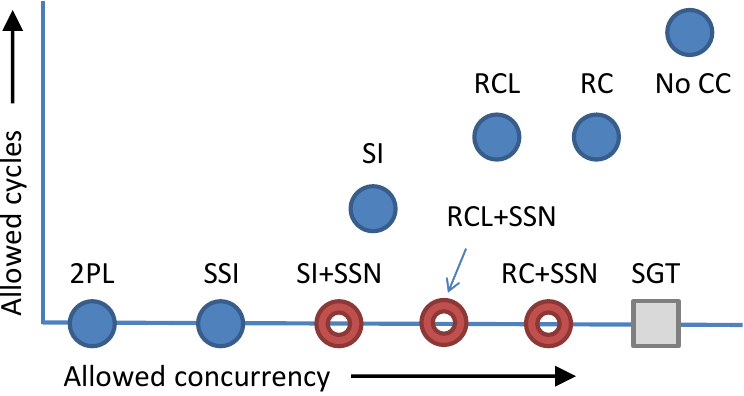}
\caption{Relative merits of existing CC schemes (solid dots) vs. the serial 
safety net (hollow dots).}
\label{fig:signature-chart}
\end{figure}

Fig.~\ref{fig:signature-chart} illustrates the relative strictness vs. 
performance trade-off for several well-known CC schemes. At one extreme, strict 
2PL ensures serializability but offers low concurrency because readers and 
writers block each other. At the other extreme, a system with no CC whatsoever 
(No CC) offers maximum concurrency but admits often intolerable anomalies 
(e.g., dirty reads and lost writes). With low performance cost, RC and its 
lock-based variant (RCL) offer much stronger semantics than No CC, and are 
often used in practice. Snapshot isolation (SI) makes a very attractive 
compromise, offering reasonably strict semantics and fairly high performance, 
while SSI offers full serializability but lowers concurrency. 
Fully precise serialization graph testing (SGT)~\cite{CasanovaB80} allows all 
(and only) cycle-free executions, but is impractical as every commit requires 
an expensive search for cycles over an ever-changing dependency graph.

\subsection{The serial safety net in a nutshell}
This paper describes the serial safety net (SSN), an efficient, general-purpose 
certifier to enforce serializability on top of a variety of CC schemes, such as 
RC and SI. SSN does not dictate access patterns---the underlying CC scheme does 
that---but instead tracks dependencies and aborts transactions that might close 
a dependency cycle if allowed to commit. 
SSN admits false positives, but is much more accurate than 
prior practical serializable CC schemes (e.g., 2PL and SSI). As illustrated 
by Fig.~\ref{fig:signature-chart}, SSN guarantees serializability with 
concurrency levels not drastically worse than the underlying CC scheme.
In particular, SI+SSN, RC+SSN, and RCL+SSN all allow higher
concurrency than 2PL and SSI.
The SSN infrastructure can 
also be used to prevent phantoms, and so offers full protection for any CC 
mechanism that forbids dirty reads and lost writes. The majority of schemes 
meet these constraints; read uncommitted (ANSI SQL) and eventual 
consistency (NoSQL favorite) are perhaps the most notable exceptions, by allowing 
dirty reads and lost writes, respectively. SSN thus expands significantly the 
universe of CC schemes that can be made serializable.

SSN can be implemented in both traditional disk-based systems and recent
main-memory databases. We focus on multi-version main-memory systems in this
paper. To facilitate dependency tracking, SSN requires globally unique
timestamps, which can be generated from a centralized source (e.g., a counter
incremented by the atomic \texttt{fetch-and-add} instruction on
x86~\cite{IntelManual}) or by augmenting unique thread-local counters to
block-allocated timestamps from a centralized source.

The gist of SSN consists of two parts: (1) a low watermark 
$\pi(T)$ for transaction $T$ that summarizes ``dangerous'' transactions that 
committed before $T$ but which must be serialized after $T$, and (2) a conservative 
validation test that is applied when $T$ commits at time $c(T)$: if $U$ has 
already committed and had a conflict with $T$ (i.e., $U$ must be serialized 
\textit{before} $T$), then $\pi(T)\nobreak \leq\nobreak c(U)\nobreak <\nobreak c(T)$ 
is forbidden because $U$ might also need to be serialized \textit{after} $T$, 
forming a cycle in the dependency graph. We prove that maintaining this 
\textit{exclusion window} suffices to prevent all cycles in the serial dependency graph, 
and then show how phantoms can be converted into dependency cycles so that SSN 
can enforce truly serializable executions in systems that otherwise 
lack phantom protection. We also show that $\pi(T)$ can be
computed efficiently for multi-version systems.

One unique aspect of SSN is that it works in spite of bugs, omissions, or 
unanticipated behaviors in the underlying CC scheme, so long as the basic 
requirements still hold. This protection is important, because CC schemes tend 
to be complex to implement, and bugs can lead to subtle problems that are 
difficult to detect and reproduce. Unanticipated behaviors are even more 
problematic. For example, a read-only anomaly in SI arises only if a reader 
arrives at exactly the wrong moment \cite{fekete-read-only-anomaly}. This 
anomaly was not discovered until SI had been in use for many years. Assuming 
SSN is implemented correctly---hopefully achievable, given its 
simplicity---bugs or unexpected behaviors in the CC scheme that would confuse 
applications, will instead trigger extra transaction aborts caused by SSN. The 
application sees only serializable executions that preserve data integrity. 

SSN is amenable to a variety of workloads and does not exaggerate the
underlying CC's favor for either reader or writer accesses. Moreover,
SSN's efficient dependency tracking and exclusion window test give
the opportunity to optimize emerging heterogeneous workloads that contain 
a significant portion of long, \textit{read-mostly} transactions: reads of stale
records that are not updated recently do not have to be tracked in the
transaction's read set. This greatly reduces bookkeeping footprint and
improves performance.

\subsection{Contributions and paper organization}
We have introduced the main techniques of SSN in an earlier paper
\cite{SSN-DAMON}. SSN uses only local knowledge of each transaction and its
direct conflicts to determine whether committing a transaction will close a
potential dependency cycle. In this paper, we leverage these main techniques to
further propose a generic approach to optimizing emerging heterogeneous
workloads, and an efficient parallel commit protocol with minimum overhead for
today's memory-optimized, multi-version systems.

Compared to the earlier version of this paper, we evaluate SSN with a much wider
set of experiments, both in simulation and ERMIA~\cite{ERMIA}, a recent OLTP
system optimized for massively parallel processors and large main memory.
Simulation results show that SSN works well under a wide variety of
circumstances, including both lock-based and multi-version CC, mixed workloads
and very high contention. Evaluation using ERMIA on a quad-socket, 60-core Xeon
server shows that SSN scales as well as the underlying CC scheme. In
particular, SSN's optimization for read-mostly transactions can significantly
reduce last level cache misses and perform more than 2$\times$ better than an
efficient parallel SSN implementation without the optimization. Compared to SSI,
SSN matches its performance for workloads with low and medium contention that do
not stress the CC protocol. For high-contention workloads with retrying aborted
transactions, SSN can provide more robust performance and better accuracy for
both write-intensive and read-only transactions.

The rest of the paper is organized as follows. In Sect.~\ref{sec:background}, we
give background on serial dependency graphs that we use throughout the paper to
understand serializability properties. Sect.~\ref{sec:ssn} discusses the design
and presents a theoretical proof of the correctness of SSN.
Sect.~\ref{sec:implementation} gives an efficient and scalable
implementation of SSN for multi-version systems, leveraging parallel programming
techniques. In Sect.~\ref{sec:reduce-overhead}, we discuss ways of making SSN
lightweight and efficient, including how we optimize read-only and heterogeneous
workloads using SSN. Sect.~\ref{sec:phantoms} extends the SSN infrastructure to
prevent phantoms for systems that are not otherwise phantom-free. We then
present evaluation results of SSN in Sect.~\ref{sec:simulations}
and~\ref{sec:eval}, using simulation and implementation respectively. We survey
related work in Sect.~\ref{sec:related-work} and conclude in
Sect.~\ref{sec:conclusion}.

\section{Serial dependency graphs}
\label{sec:background}
We model the database as a multi-version system that consists
of a set of records \cite{adya99}. Each transaction consists of a sequence of 
reads and writes, each dealing with a single record. 
In this model, each record is seen as a totally-ordered sequence of
\textit{versions}. A write always generates a new version at the end of 
the record's sequence; a read returns a version in the record's sequence
that the underlying CC mechanism deems appropriate.
In the model, each record exists forever, with a continually growing set 
of versions. In practice, obsolete versions that are no longer needed 
are periodically recycled to avoid wasting storage space.
Insertions and deletions are represented using a special ``invalid'' value, 
for the initial version of a record that has not yet been inserted, 
and also for the last version of a deleted record.
Insertions are updates that replace invalid versions. 
A deletion flags a record as invalid without physically deleting it, 
and the record can continue to participate in CC if needed. 
The physical deletion is performed in background once the record 
is no longer reachable~\cite{graefe10}.
In this model, we do not explicitly model the case where
a transaction reads a record it has previously written, 
because doing so does not add new edges to the dependency graph, 
i.e., no new cycles can arise.
Many real systems ensure that a read will return the version that the 
transaction itself wrote. 
We note, however, that there are exceptions: 
certain OCC-based systems~\cite{FOEDUS,silo} 
do not allow a transaction to read its own writes.

We first only consider the serial dependency cycles that may arise 
among individual records that are read and written. In the absence of 
insertions, preventing such cycles produces a serializable schedule. 
In Sect.~\ref{sec:phantoms}, we extend these concepts to include analogs of 
hierarchical locking, lock escalation, and predicate-based selection to
prevent phantoms. 

Accesses by transaction $T$ generate {\em serial dependencies} that constrain 
$T$'s place in the global partial order of transactions. Serial dependencies 
can take two forms:
\begin{enumerate}
\item $T_i \wxdep{} T$ (read/write dependency): $T$ read ($T_i \wrdep{} T$) or
overwrote ($T_i \wwdep{} T$) a version 
that $T_i$ created, so $T$ must be serialized after $T_i$.
\item $T \rwdep{} T_j$ (read anti-dependency): $T$ read a version that $T_j$ 
overwrote, so $T$ must be serialized before $T_j$.
\end{enumerate}

A read implies a dependency on the transaction that created the returned 
version, and an anti-dependency from the transaction that (eventually) 
produces the next version of the same record (overwriting the version that 
was read). A write implies a dependency on the transaction that generated 
the overwritten version as well as dependencies on all reads that access the
new version.
Accessing different versions of the same record (e.g., a non-repeatable read) 
within a transaction implies a serialization failure: 
$T_1 \rwdep{} T_2 \wrdep{} T_1$. 

We use $T \xdep U$ to represent a serial dependency of either case: 
either $T \wxdep U$ or $T \rwdep U$, and we say that $T$ is a 
direct predecessor of $U$ (i.e., $U$ is a direct successor of $T$). 
Note that in the former case $x$ can be $r$ or $w$.
The set of all serial dependencies between committed transactions 
forms the edges in a directed graph $G$, whose vertices are committed 
transactions and whose edges indicate required serialization ordering 
relationships. When a transaction commits, it is added to $G$, along 
with any edges involving previously committed transactions. $T$ may 
also have {\em potential edges} to uncommitted dependencies, which 
will be added to $G$ if/when those transactions commit. 

Note that our notation puts the arrowhead of a dependency arrow near 
the transaction that must be serialized before the other. This is 
the reverse of the usual notation \cite{adya99} but it makes 
the arrowhead look similar to the transitive effective ordering 
relation symbol we define next.

We define a relation $\prec$ for $G$, such that 
$T_i\nobreak \prec\nobreak T_j$ means $T_i$ is ordered before 
$T_j$ along some path through $G$ (i.e., $T_i \xdep \ldots \xdep T_j$).
We say that $T_i$ is a predecessor of $T_j$ (or equivalently, 
that $T_j$ is a successor of $T_i$). When considering 
\textit{potential edges}, we can also speak of potential successors 
and predecessors. These are transactions for which the potential edges 
(along with edges already in $G$) require them to be serialized after 
(or respectively before) $T$.

A cycle in $G$ produces 
$T_i \prec T_j \prec T_i$, and indicates a serialization failure because 
$G$ then admits no total ordering. 
The simplest cycles involve two transactions and two edges:
\begin{enumerate}
\item $T1 \wxdep{} T2 \wxdep{} T1$. $T1$ and $T2$ saw each 
others' writes (isolation failure).
\item $T1 \wxdep{} T2 \rwdep{} T1$. $T2$ saw some, but not all, 
of $T1$'s writes (atomicity failure).
\item $T1 \rwdep{} T2 \rwdep{} T1$. $T1$ and $T2$ each overwrote a value 
that the other read (write skew).
\end{enumerate}

In our work, a central concept is the relationship between the partial order
of transactions that $G$ defines, and the total order defined by their commit 
times. At the moment transaction $T$ enters pre-commit, we take a monotonically
increasing timestamp, and call it $c(T)$. An edge in $G$ is a \textit{forward edge} 
when the predecessor committed first in time, and a \textit{back edge} when the 
\textit{successor} committed first. A forward edge can be any type of 
dependency, but (for the types of CC algorithms we deal with, which enforce write 
isolation) back edges are always read anti-dependencies where the overwrite 
committed before the read. We denote forward and back edges as 
$T_1 \fdep T_2$ and $T_1 \rdep T_2$, respectively. Let us write 
$T_0 \rdepstar T_k$ for the reflexive and transitive back edge situation 
where $T_0$ is reachable from $T_k$ without following any forward edges, 
e.g., $T_0 \rdep T_1 \rdep T_2 \rdep T_3 \ldots \rdep T_{k-1} \rdep T_k$.
Note that $T \rdepstar T$ always holds.

We next describe a representative but not exhaustive sampling of CC mechanisms
that will be used for both discussions and evaluations in the rest of the paper:
\begin{itemize}
\item {\bf Read Committed (RC).} Reads return the newest committed version 
of a record and never block. Writes add a new version that overwrites the 
latest one, blocking only if the latter is uncommitted. Allows dependency 
cycles but forbids isolation failures (dirty reads and lost writes).

\item {\bf Read Committed with Locking (RCL)}. An RC variant (with the same 
types of cycles) that can be implemented with a single-version system using 
in-place updates. RCL is typically achieved by combining short-duration read 
locks with long-duration write locks. Readers and writers alike must block 
until the latest version is committed, but readers do not block writers. 

\item {\bf Snapshot Isolation (SI).} Each transaction reads from a consistent 
\textit{snapshot}, consisting of the newest version of each record that predates 
some timestamp (typically, the transaction's start time). Writers must abort if 
they would overwrite a version created after their snapshot. Allows write skew 
anomalies, but forbids isolation failures and enforces write atomicity.

\item {\bf Serializable Snapshot Isolation (SSI).} 
Like SI, but forbids the ``dangerous structure'': 
$T_1 \rwdep{} T_2 \rwdep{} T_3$ where $T_3$ committed first \cite{crf09} 
(with some exceptions made for read-only transactions \cite{pg12}). 
No cycles are possible and so all executions are serializable.

\item {\bf Strict Two-Phase Locking (2PL).} Used by many single-version 
systems with long-duration read and write locks. Reads return the newest 
version of a record, blocking if it has not committed yet. Writes replace 
the latest version, blocking if there are any in-flight reads or writes 
on the record by other transactions. No cycles are possible.

\end{itemize}

SSN can work with most realistic CC schemes that are at least as strong 
as RC (formal requirements are given in Sect.~\ref{sec:ssn}). 
We are especially interested in weaker CC schemes that allow 
atomicity failures, non-repeatable reads, write skew, and more complex 
cycles in $G$, including various forms of read skew 
(e.g. $T_1 \xdep{} T_3 \rwdep T_2 \xdep T_4 \rwdep T_1$). 

\begin{figure*}[t]
\includegraphics[width=\textwidth]{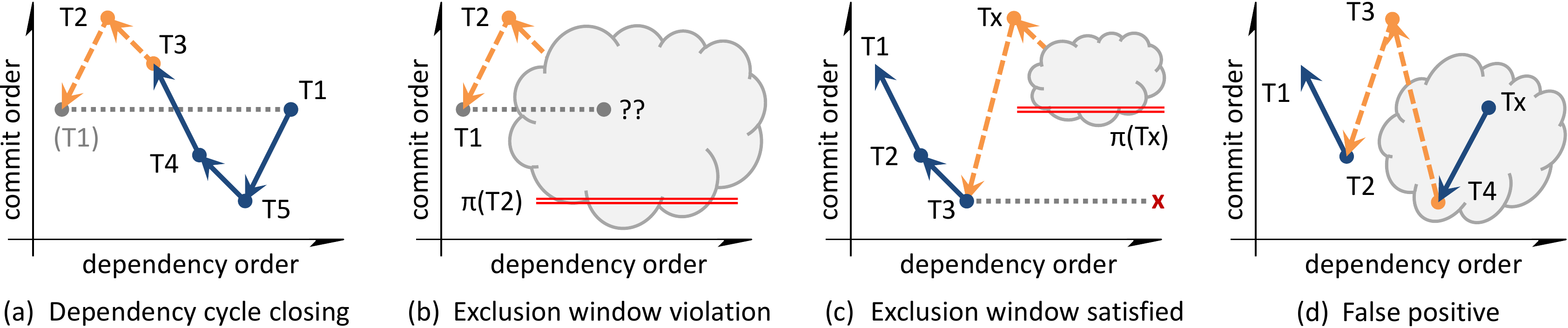}
\caption{A pictorial motivation and description of SSN.
Subsets of the dependency graph are 
shown in a serial-temporal layout where forward and back edges always have 
positive and negative slopes, respectively.}
\label{fig:ssn-in-pics}
\end{figure*}

\section{SSN: The serial safety net}
\label{sec:ssn}
In this section, we first describe how SSN prevents committing transactions 
that will close potential cycles in the dependency graph. We then formally prove the 
correctness of SSN and compare it with other serializable CC schemes.

\subsection{Preventing dependency cycles}
Given a CC scheme that admits cycles in the serial dependency graph, 
SSN can be layered on top as a pre-commit protocol to abort transactions 
that might form potential cycles if committed. Although SSN can be overlaid on 
various CC schemes, we require the underlying CC scheme forbid lost 
writes and dirty reads (unless it is the transaction reading its own 
writes), which is effectively as strong as RC.

In addition to the commit timestamp $c(T)$ of transaction $T$, 
SSN associates $T$ with two other timestamps: $\pi(T)$ and $\eta(T)$,
which are respectively the low and high watermarks used to detect 
conditions that might indicate a cycle in the dependency graph $G$ if $T$
is committed.
We define $\pi(T)$ as the commit time of $T$'s oldest successor 
$U$ reached through a path of back edges:
%\begin{equation}
%%  \hspace*{-0.5cm}\pi(T) = min\left(c(U) : T \rdepstar U\right) = min\left(\left\{\pi(U) : T \rdep U\right\} \cup \left\{c(T)\right\}\right)\notag
%\end{equation}
\begin{align*}
 \hspace*{1.2cm} \pi(T) & = min\left(c(U) : T \rdepstar U\right) \\
  &= min\left(\left\{\pi(U) : T \rdep U\right\} \cup \left\{c(T)\right\}\right)
\end{align*}
The first equation captures the definition, in which $T$'s successor $U$ 
that overwrote versions read by $T$, committed first, forming a back 
edge that represents a read anti-dependency. The second, equivalent 
recursive equation, shows how this would be computed from only the 
immediate successors of a transaction in $G$, without traversing 
the whole graph.
Note that $\pi(T) < c(T)$, and the values of $c(T)$ and $\pi(T)$ 
are fixed once $T$ has committed; $\pi$ will not change because 
committed $T$ only acquires new successors via forward edges, which do 
not influence $\pi(T)$.

The essence of SSN is a certification that prevents a transaction $T$ 
from committing if an exclusion window check fails for some direct
predecessor $U$:
\begin{definition}\label{def:exclusion-window}
  A dependency edge $U \xdep T$ in $G$ (or alternatively, transaction $U$) 
  violates the exclusion window of 
  $T$ if $\pi(T) \leq c(U) < c(T)$.
%  $T$ if $\pi(T) \leq c(U) \leq c(T)$.
% since we make sure U and T won't have the same commit timestamp,
% it's correct to drop the equal in the second inequality.
% This is different from what we had in the DAMON paper.
\end{definition}

The inequality checks whether $U$ (a predecessor of $T$ 
which committed first) might also be a successor of $T$ 
(because $U$ did not commit earlier than $T$'s oldest successor), 
indicating a potential cycle in $G$. When implementing exclusion window
checks, we can use two observations to simplify the process. First, we 
need only consider predecessors that committed before $T$ (the second 
inequality), which means the check can be completed during pre-commit of 
$T$ (regardless of what happens later). Second, of those predecessors that 
committed before $T$, we only need to examine the most recently-committed 
one. Using the following definition of $\eta(T)$, an exclusion window 
violation occurs if $\pi(T) \leq \eta(T)$, so $T$ must abort:
$$\eta(T) = max\left(\left\{c(U) : U \fdep T\right\} \cup 
\left\{-\infty\right\}\right)$$
We next illustrate visually why tracking $\pi(T)$ and enforcing exclusion 
windows might prevent cycles in $G$. Formal descriptions are provided 
later in Sect.~\ref{sec:correctness}.

Fig.~\ref{fig:ssn-in-pics}(a) gives a {\em serial-temporal} representation 
of a cycle in $G$. The horizontal axis gives the relative serial dependency order 
(as implied by the edges in $G$); the vertical axis gives the global commit order. 
In this figure, forward edges have positive slope (e.g., $T5 \xdep T1$), 
while back edges have negative slope (e.g., $T4 \xdep T3$). A transaction 
might appear more than once (connected by dashed lines, e.g., $T1$), 
if a cycle precludes a total ordering.

Visually, it is clear that $T1$ violates $T2$'s exclusion window because 
$\pi(T2) = c(T5) < c(T1) = \eta(T2)$. Fig.~\ref{fig:ssn-in-pics}(b) depicts 
information that is available to $T2$ as local knowledge. Without knowing 
$T1$'s predecessors, $T2$ must assume that $T1$ might also be a successor. 
Fig.~\ref{fig:ssn-in-pics}(c) demonstrates a case where the exclusion window 
is satisfied: $T3$ committed before $\pi(Tx)$---even earlier than $Tx$'s 
oldest successor---so $T3$ could not be a successor and $Tx$ will not close 
a cycle if committed; $T1$ cannot have any predecessor newer than $\pi(Tx)$ 
as that would violate its own exclusion window; any later transactions that 
links $T1$ with $Tx$ would suffer an exclusion window violation.

Finally, Fig.~\ref{fig:ssn-in-pics}(d) illustrates a false positive case, 
where $T3$ aborts due to an exclusion window violation, even though no cycle 
exists. We note, however, that allowing $T3$ to commit would be dangerous: 
some predecessor to $T1$ might yet commit with a dependency on $T4$, 
closing a cycle without triggering any additional exclusion window violations.

\subsection{Safe retry}
Users submit transactions supposing they will commit, however, the underlying 
CC scheme might abort transactions due to various reasons, such as write-write 
conflicts. Ideally, the CC scheme should ensure that all transactions eventually
commit (perhaps after some number of automatic retries), 
unless the user requests an abort.

SSN exhibits the safe retry property~\cite{pg12}. 
Suppose SSN aborts transaction $T$ 
because $U$ violates its exclusion window, and that the user retries 
immediately with $T'$. 
Any back-edge successor $T$ had, is a transaction $S$ that committed before $T$.
Since $T'$ started after $S$ committed, $T'$ will not read data that will be
overwritten by $S$. That is, $T'$ will not have the same successor, and the same
set of dependencies cannot form for $T'$.

The importance of safe retry is often overlooked, 
and many serializable schemes do {\em not} provide this property, 
including 2PL ($T'$ could deadlock with the winner of a previous deadlock) 
and OCC~\cite{hekaton,silo} that relies on read set validation (the 
overwriter could still be in progress, causing another failure).
In Sect.~\ref{sec:eval}, we empirically evaluate this property for SSN and
other CC schemes.

\begin{figure*}[t]
\includegraphics[width=\textwidth]{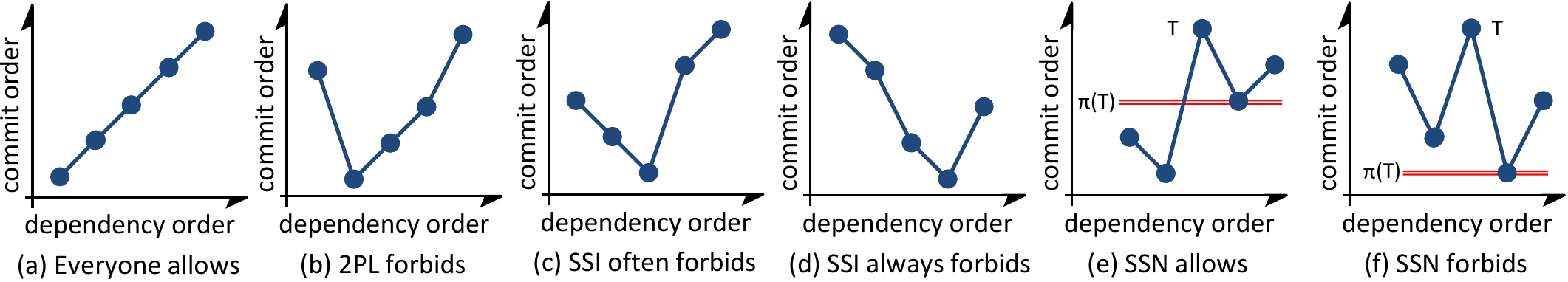}
\caption{SSN allows all schedules (a--e) that do not have ``peaks,'' and 
also ``peaks'' where no predecessor of $T$ violates the exclusion window. 
SSN rejects only case (f); other schemes tend to reject the ``valleys'' that arise frequently under 
MVCC.}
\label{fig:ssn-scenarios}
\end{figure*}

\subsection{Correctness}
\label{sec:correctness}
We now formally prove the correctness of SSN. Based on the database model 
we set up in Section~\ref{sec:background}, 
we first recall the key result of serialization theory:

\begin{theorem}
\label{acyclic-thm}
Let an execution with schedule $h$ have a serial dependency graph $G(h)$ 
with no cycles. Then the execution is serializable\footnote{There are many 
formulations such as \cite{BernsteinSW79} and \cite{Papadimitriou79b}, 
the presentation with this form of dependency definition is in \cite{adya99}.}.
\end{theorem}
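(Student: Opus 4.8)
The plan is to exploit the standard correspondence between acyclicity of a dependency graph and the existence of a topological order, and then to argue that executing the transactions in that order reproduces the behavior of the original schedule. First I would observe that the vertices of $G(h)$ are exactly the committed transactions, so the graph is finite; by hypothesis it has no cycles, and a finite directed acyclic graph always admits a topological sort. This gives a total order $<_s$ on the committed transactions that is a linear extension of $\prec$, i.e.\ every edge $T_i \xdep T_j$ of $G(h)$ satisfies $T_i <_s T_j$. I would then take $h_s$ to be the serial schedule that runs the committed transactions one at a time in the order $<_s$, with each read returning the version installed by the most recent writer of that record under $<_s$.

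The heart of the argument is to show that $h_s$ is equivalent to $h$, in the sense that the two schedules induce the same reads-from relation and the same final version of every record; serializability then follows by the definition we adopt from \cite{adya99}. I would prove reads-from preservation read by read. Suppose in $h$ some read by $T_j$ returns the version $v$ of record $x$ created by $T_i$. This read induces the forward dependency $T_i \wrdep T_j$, so $T_i <_s T_j$ and the writer of $v$ still precedes its reader in $h_s$. It then remains to rule out that any transaction writes $x$ strictly between $T_i$ and $T_j$ under $<_s$. A transaction creating an \emph{earlier} version of $x$ than $v$ is connected to $T_i$ by a chain of write dependencies $\wwdep$, hence ordered before $T_i$ by $<_s$. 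For a \emph{later} version, let $T_k$ be the immediate overwriter of $v$; because $T_j$ read $v$ rather than a newer version, $T_j$ contributes the anti-dependency $T_j \rwdep T_k$, which forces $T_j <_s T_k$, and every still-later writer of $x$ follows $T_k$ along the $\wwdep$ chain. Thus no writer of $x$ falls between $T_i$ and $T_j$ in $<_s$, and $T_j$ reads exactly $v$ in $h_s$ as well.

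Having matched every read, I would close the equivalence by noting that the final version of each record is also preserved: the version order of a record forms a chain fully captured by its $\wwdep$ edges, all of which $<_s$ respects, so the last writer of $x$ under $<_s$ coincides with the transaction whose version is final in $h$. The same version-order bookkeeping covers records that are written but never read, as well as the ``invalid'' initial and deleted versions used to model insertions and deletions, so these cases add no new work. Together, reads-from preservation and final-write preservation give that $h_s$ is equivalent to $h$, and since $h_s$ is by construction a serial execution, $h$ is serializable.

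I expect the main obstacle to be the second step---the \emph{no-intervening-overwrite} argument---since it is precisely there that all three edge types must cooperate. Acyclicity alone does not locally fix which version a read returns; one must check that the anti-dependency edges $\rwdep$, together with the write-dependency chains $\wwdep$, are rich enough to pin the read version uniquely. These anti-dependencies are exactly the back edges of the paper's serial-temporal picture, so the subtlety lies in confirming that they, and not merely the reads-from edges, are what guarantee the reader precedes every overwriter in $<_s$.
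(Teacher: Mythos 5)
The paper does not actually prove this theorem: it is recalled as the classical result of serialization theory and discharged by citation (to Bernstein et al., Papadimitriou, and Adya's formulation), so there is no in-paper proof to match yours against. Your proposal supplies the standard proof of that cited result, and it is essentially correct: topologically sort the finite acyclic graph $G(h)$, run the committed transactions serially in that order, and verify view-equivalence by checking that every reads-from pair is preserved and that no writer of the record can be interleaved between the version's creator and its reader. Your handling of the key step is right for the paper's model: creators of earlier versions are pushed before $T_i$ by the $\wwdep$ chain (since every write extends the latest version, consecutive version creators are adjacent in $G$), and the immediate overwriter $T_k$ of the read version is pushed after $T_j$ by the anti-dependency $T_j \rwdep T_k$, with all later writers following $T_k$ transitively --- so all three edge types are used exactly where they are needed. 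One point deserves an explicit sentence: the argument silently assumes that every version read by a committed transaction was created by a committed transaction, since $G(h)$ has only committed transactions as vertices; a dirty read would produce a reads-from pair with no corresponding edge in $G$, and acyclicity alone would not preserve it. This is harmless in context --- the paper only ever applies the theorem to certifiable schedulers, which forbid dirty reads, and the cited formulations build the analogous restriction into their definitions --- but your proof should state the assumption rather than inherit it implicitly.
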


As mentioned in previous sections, SSN requires the underlying CC scheme 
forbid lost writes and dirty reads:

\begin{definition}\label{certifiable-cc}
Let a \textit{certifiable} scheduler be any CC scheme that forbids lost 
writes and dirty reads (other than a transaction reading its own writes).
\end{definition}

Definition~\ref{certifiable-cc} effectively allows any CC scheme at least 
as strong as RC. In particular, the underlying CC scheme is free to return 
any committed version from a read (not necessarily in a repeatable fashion), 
and can delay accesses arbitrarily.

Given a non-serializable schedule $h$ produced by a certifiable scheduler, 
we first identify the ``dangerous'' edges in its dependency graph $G(h)$ 
that SSN targets. We then prove that these edges exist in any dependency 
cycle that arises under a certifiable scheduler. We argue the correctness 
of SSN as follows:
\begin{theorem}
\label{exclusion-edge}
Let $h$ be any non-serializable history produced by a certifiable scheduler. 
Then the dependency graph $G(h)$ contains at least one exclusion window violation.
\end{theorem}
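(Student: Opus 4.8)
The plan is to reach the statement through the contrapositive of Theorem~\ref{acyclic-thm}: since $h$ is non-serializable, $G(h)$ must contain a cycle, and I will locate an exclusion window violation on that cycle. The first step is to fix such a cycle $C$ and adopt the serial-temporal view. Traversing $C$ in the direction of its edges (the serialization order $\xdep$), each edge is either a \emph{forward} edge, where the predecessor committed first and the commit timestamp rises, or a \emph{back} edge, where the successor committed first and the timestamp falls; this is exactly the forward/back classification, and under a certifiable scheduler the back edges are the anti-dependencies whose overwrite precedes the read. Because commit timestamps are globally unique and the traversal is closed, $C$ can consist neither solely of forward edges (timestamps would strictly increase all the way around) nor solely of back edges, so it contains at least one edge of each type.

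Next I would decompose $C$ into its maximal ascending runs (blocks of consecutive forward edges) and maximal descending runs (blocks of consecutive back edges). These alternate, producing an equal number $r\ge 1$ of peaks $P_1,\dots,P_r$ (where an ascending run meets the following descending run) and valleys $V_1,\dots,V_r$ (where a descending run meets the following ascending run), cyclically ordered as $\dots V_{i-1}, P_i, V_i \dots$. For each peak $P_i$ I record two facts. On its descending side the run is all back edges, so $P_i \rdepstar V_i$ and hence $\pi(P_i)\le c(V_i)$. On its ascending side the immediate forward predecessor $U_i$ (with $U_i \fdep P_i$, so $c(U_i)<c(P_i)$) lies on the ascending run rising out of $V_{i-1}$, so commit times increase along that run and $c(U_i)\ge c(V_{i-1})$.

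The crux is a global extremal argument over the valleys rather than over any single peak. The cyclic sequence $c(V_1),\dots,c(V_r)$ cannot strictly increase all the way around, so some index $i$ satisfies $c(V_{i-1}) \ge c(V_i)$ (when $r=1$ the two flanking valleys coincide and this is an equality; when $r\ge 2$ distinctness of the timestamps makes it strict for at least one $i$). For that peak $P_i$ I chain the recorded inequalities, $c(U_i)\ge c(V_{i-1}) \ge c(V_i) \ge \pi(P_i)$, together with $c(U_i) < c(P_i)$, to obtain $\pi(P_i)\le c(U_i) < c(P_i)$. Hence the forward edge $U_i \fdep P_i$ violates the exclusion window of $P_i$ in the sense of Definition~\ref{def:exclusion-window}, which is exactly what must be exhibited.

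I expect the main obstacle to be resisting the tempting but insufficient shortcut of picking the transaction with the globally latest commit time and hoping its incoming forward predecessor violates its window: this fails precisely when that predecessor committed before the peak's back-edge successor, so no individual peak is guaranteed to work in isolation. The essential idea is the cyclic pigeonhole on valley timestamps, which forces some peak whose ascending-side valley is no older than its descending-side valley, and it is that peak that is guaranteed to carry a violation. The remaining work is routine bookkeeping---confirming that ascending runs are forward edges and descending runs are back edges, that $P_i \rdepstar V_i$ yields $\pi(P_i)\le c(V_i)$, and that the degenerate case $r=1$ is absorbed by the equality $c(V_{i-1}) = c(V_i)$.
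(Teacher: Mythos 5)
Your proof is correct and is essentially the paper's own argument in a more elaborate costume: the paper simply takes the valley to be the globally earliest-committing transaction $T_n$ in the cycle (a special case of your pigeonhole choice, since the global minimum is a valley with $c(V_{i-1}) \ge c(V_i)$), lets $T_k$ be the peak of the maximal descending run $T_k \rdepstar T_n$, and exhibits the same violating forward edge $T_{k-1} \fdep T_k$ via the same chain $\pi(T_k) \le c(T_n) \le c(T_{k-1}) < c(T_k)$. The full peaks-and-valleys decomposition and cyclic pigeonhole are sound but add nothing beyond this single extremal choice.
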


\begin{proof}
By the hypothesis that $h$ is non-serializable and Theorem~\ref{acyclic-thm}, 
$G(h)$ must contain a cycle involving $n \geq 2$ transactions.\footnote{We 
ignore self loops, since our model excludes them. In reality transactions 
should be allowed to read their own writes.} We first name the transactions 
in that cycle, so that $T_n$ committed first in time: $T_n\nobreak \xdep\nobreak 
T_1\nobreak \xdep\nobreak T_2\nobreak \xdep\nobreak \ldots\nobreak 
\xdep\nobreak T_{n-1}\nobreak \xdep\nobreak T_n$. Because $T_n$ committed 
first in the cycle, its predecessor---which is also a successor---must be 
reached by a back edge. We can choose the lowest value of $k$ such that 
$T_k\nobreak \rdepstar \nobreak T_n$ holds. Then $\pi(T_k) \leq c(T_n)$. 
Further, the predecessor of $T_k$ ($T_{k-1}$, or $T_n$ if $k=1$) must be 
reached by a forward edge. Combining the two facts reveals an exclusion 
window violation: $\pi(T_k)\nobreak \leq\nobreak c(T_n)\nobreak \leq\nobreak 
c(T_{k-1})\nobreak <\nobreak c(T_k)$. Since we have shown that $T_k$ always 
exists and always has a predecessor that violates $T_k$'s exclusion window, 
we conclude that $G(h)$ always contains an exclusion window violation.\qed
\end{proof}

\begin{definition}
\label{thm:ssn-works}
A certifiable scheduler is said to \textit{apply SSN certification} if it 
aborts any transaction $T$ that, by committing, would introduce an 
exclusion window violation into the dependency graph. That is, 
SSN forces $T$ to abort if there exists a potential edge $U \xdep T$ where 
$\pi(T) \leq c(U) < c(T)$.
\end{definition}

\begin{theorem}
All executions produced by a certifiable scheduler that applies SSN 
certification are serializable.
\end{theorem}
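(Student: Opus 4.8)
The plan is to prove the statement by contradiction, leaning on Theorem~\ref{exclusion-edge} to locate a violation in the final graph and on Definition~\ref{thm:ssn-works} to argue that the scheduler would have refused the transaction responsible for that violation. Suppose, toward a contradiction, that a certifiable scheduler applies SSN certification yet produces some committed history $h$ that is not serializable. By Theorem~\ref{exclusion-edge}, the dependency graph $G(h)$ of the committed transactions contains at least one exclusion window violation: there is a dependency edge $U \xdep T$ with $\pi(T) \le c(U) < c(T)$.

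The heart of the argument is temporal: I must show that this violation---defined over the \emph{final} graph $G(h)$---was already detectable by SSN at the instant $T$ entered pre-commit, so that SSN would have aborted $T$ rather than letting it commit. Two observations establish this. First, since $c(U) < c(T)$, transaction $U$ committed strictly before $T$ reached pre-commit; hence $U$ is already committed when $T$ is certified, and the edge $U \xdep T$ is (at least) a potential edge of $T$ that SSN can inspect---indeed a \emph{forward} edge, because the predecessor $U$ committed first. Second, $\pi(T)$ is fully determined at $T$'s pre-commit: by construction $\pi(T) < c(T)$, and every transaction contributing to $\pi(T)$ is a back-edge successor of $T$, which (for a certifiable scheduler enforcing write isolation) is a read anti-dependency whose overwrite committed before $T$. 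Thus all data needed to evaluate the check $\pi(T) \le c(U)$ is present at $T$'s certification time.

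Combining the two observations, at the moment $T$ enters pre-commit the already-committed predecessor $U$ satisfies $\pi(T) \le c(U) < c(T)$, which is exactly the condition in Definition~\ref{thm:ssn-works}. A scheduler that applies SSN certification must therefore force $T$ to abort, so $T$ cannot appear as a committed transaction in $h$. This contradicts the assumption that $U \xdep T$ is an edge of $G(h)$, whose vertices are by definition committed transactions. Hence no non-serializable history can be produced, and every execution of a certifiable scheduler applying SSN certification is serializable.

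I would expect the only delicate step to be the temporal one in the second paragraph: reconciling the \emph{global} definition of the exclusion window violation in $G(h)$ with the purely \emph{local}, pre-commit information available to $T$. The argument hinges on the fact that back edges always run to earlier-committed transactions, so that $\pi(T)$ is frozen once $T$ commits and the dangerous predecessor $U$ is necessarily reached by a forward edge and hence already committed when $T$ is certified; without these facts one could not guarantee that a whole-graph condition is visible at a single transaction's certification point.
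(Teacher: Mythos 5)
Your proof is correct and follows essentially the same route as the paper's: a proof by contradiction that invokes Theorem~\ref{exclusion-edge} to locate an exclusion window violation in $G(h)$ and then observes that a scheduler applying SSN certification would never have admitted the offending edge. Your second paragraph merely makes explicit the temporal justification (that $U$ is already committed and $\pi(T)$ is already frozen at $T$'s pre-commit) which the paper leaves implicit in its one-line appeal to the certification check, so this is a faithful, slightly more detailed rendering of the same argument.
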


\begin{proof}
By contradiction: If there is any execution of the scheduler that is 
non-serializable, Theorem \ref{exclusion-edge} shows that there is an 
edge in the dependency graph that violates the exclusion window.
However, the certification check in the scheduler does not allow any 
such edge to be introduced.\qed
\end{proof}

We next formally prove SSN's safe retry property. Suppose SSN aborts
transaction $T$ because $U$ violates its exclusion window, and that the user
retries immediately with $T'$ . Then the same dependencies
cannot force $T$ to abort (though other newly arrived transactions could
produce a new exclusion window violation for $T'$).

\begin{theorem}
SSN provides the ``safe retry'' property, assuming the underlying CC scheme 
does not allow $T$ to see versions that were overwritten before $T$ began.
\end{theorem}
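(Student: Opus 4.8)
The plan is to show that the particular anti-dependencies responsible for $T$'s low watermark --- and hence for the exclusion window violation that aborted $T$ --- simply cannot be reconstructed by $T'$, because the transactions that overwrote $T$'s reads have already committed by the time $T'$ begins. First I would observe that an exclusion window violation for $T$ forces $\pi(T) < c(T)$: if $\pi(T) = c(T)$, then no predecessor $U$ can satisfy $\pi(T) \le c(U) < c(T)$ and no violation exists. Hence $T$ must possess at least one direct back edge $T \rdep S$, i.e. a read anti-dependency $T \rwdep S$ that is a back edge, where $S$ overwrote some version $v$ that $T$ read. By the very meaning of a back edge the successor commits first, so $c(S) < c(T)$.

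Second, I would pin down the timing, which is the crux of the argument. Since SSN certifies during pre-commit, $T$ first acquires $c(T)$ and only afterward discovers the violation and aborts; therefore $T$'s abort occurs after every $S$ with $T \rdep S$ has committed (each such $c(S) < c(T)$, and each such $S$ is already in $G$). The retried transaction $T'$ begins only after $T$ has aborted, so $T'$ begins strictly after each such $S$ committed, and in particular after $S$ installed its overwrite of $v$.

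Third, I would invoke the hypothesis: the underlying CC scheme never lets a transaction read a version that was already overwritten before the transaction began. Every version that $S$ overwrote was overwritten (at $S$'s commit) before $T'$ started, so $T'$ reads none of them and therefore cannot form the anti-dependency $T' \rwdep S$, and in particular not the back edge $T' \rdep S$. Applying this to \emph{every} direct back-edge successor $S$ of $T$, none of them is a back-edge successor of $T'$; consequently $\pi(T')$ is not pushed down by any successor that determined $\pi(T)$, so the identical exclusion window violation cannot recur even if the offending predecessor $U$ happens to commit again during $T'$'s lifetime. (A forward dependency on $S$ --- e.g. $T'$ reading the new version $S$ wrote --- may still arise, but a forward edge never lowers $\pi$, so it is harmless.) Any later abort of $T'$ must therefore rest on a genuinely new dependency to a transaction that commits after $T'$ begins, which is exactly the escape hatch the statement permits.

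The main obstacle I expect is stating the timing precisely and tying it to the hypothesis: I must lean on SSN being a pre-commit certifier, together with the ``successor commits first'' meaning of a back edge, to guarantee that every relevant overwriter has committed before $T'$ starts, and then verify that the clause ``versions overwritten before $T$ began'' is precisely what forbids $T'$ from re-reading $v$. A secondary point requiring care is the scope of ``the same dependencies'': I should make explicit that eliminating $T$'s back-edge successors (the terms that raise $\pi$) already breaks the violation, so I need not separately track the predecessor $U$, even though $U$ could in principle reappear.
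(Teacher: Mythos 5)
Your proposal is correct and follows essentially the same route as the paper's own proof: identify the back-edge successor $S$ responsible for lowering $\pi(T)$, note that a back edge means $S$ committed before $T$'s (attempted) commit and hence before $T'$ began, and invoke the hypothesis on the underlying CC to conclude that $T'$ cannot re-form the anti-dependency $T' \rdep S$, so the same violation cannot recur. Your version merely spells out a few steps the paper leaves implicit (that a violation forces $\pi(T) < c(T)$ and hence the existence of a back edge, and that forward edges to $S$ are harmless).
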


\begin{proof}
An exclusion window violation requires $U \xdep T \rdep S$, 
where $\pi(T) < c(U)$ and $T$ read a value which $S$ will eventually overwrite.
By the definition of back edge, $S$ committed before $T$ tried to commit,
therefore, before $T'$ starts. Therefore, $T'$ will read the version $S$ created
(or a later one), so no anti-dependency between $T'$ and $S$ will be created.
That is, the situation that caused $T$ to have an exclusion window violation
will not occur for $T'$, although other dependencies might form
for $T$, and another violation may occur.\qed
\end{proof}

\subsection{Discussion}
\label{app-discuss}
We now compare SSN with other cycle prevention schemes, and reason about 
their relative merits. Fig.~\ref{fig:ssn-scenarios} highlights several 
``shapes'' that transaction dependencies can take when plotted in the 
serial-temporal form. Of all the serializable schedules shown, SSN rejects 
only the last. In contrast, 2PL admits only the first (all others contain 
forbidden back edges). SSI always admits cases (a) and (b), always rejects 
(d), and often rejects (c) and (f).\footnote{SSI allows (c) if the leftmost 
transaction is read-only and sufficiently old, but rejects (f) if a 
(harmless) forward anti-dependency edge joins $T$ with its predecessor.} 
Case (e) cannot even arise under SI, let alone SSI. Thus, the improved cycle 
test in SSN allows it to tolerate a more diverse set of transaction profiles 
than existing schemes, including schedules forbidden by SI.

Fig.~\ref{fig:nasty-swimlane} illustrates the accuracy of SSN in a different way 
using a simple schedule involving only three transactions, 
with time flowing downward (a). On the surface, 
all might appear reasonable: each transaction makes some number of reads 
before writing one record and committing. However, once the execution passes the 
horizontal dotted line, many serializable CC strategies are doomed to abort at least 
one transaction.
\begin{figure}[t]
\includegraphics[width=\columnwidth]{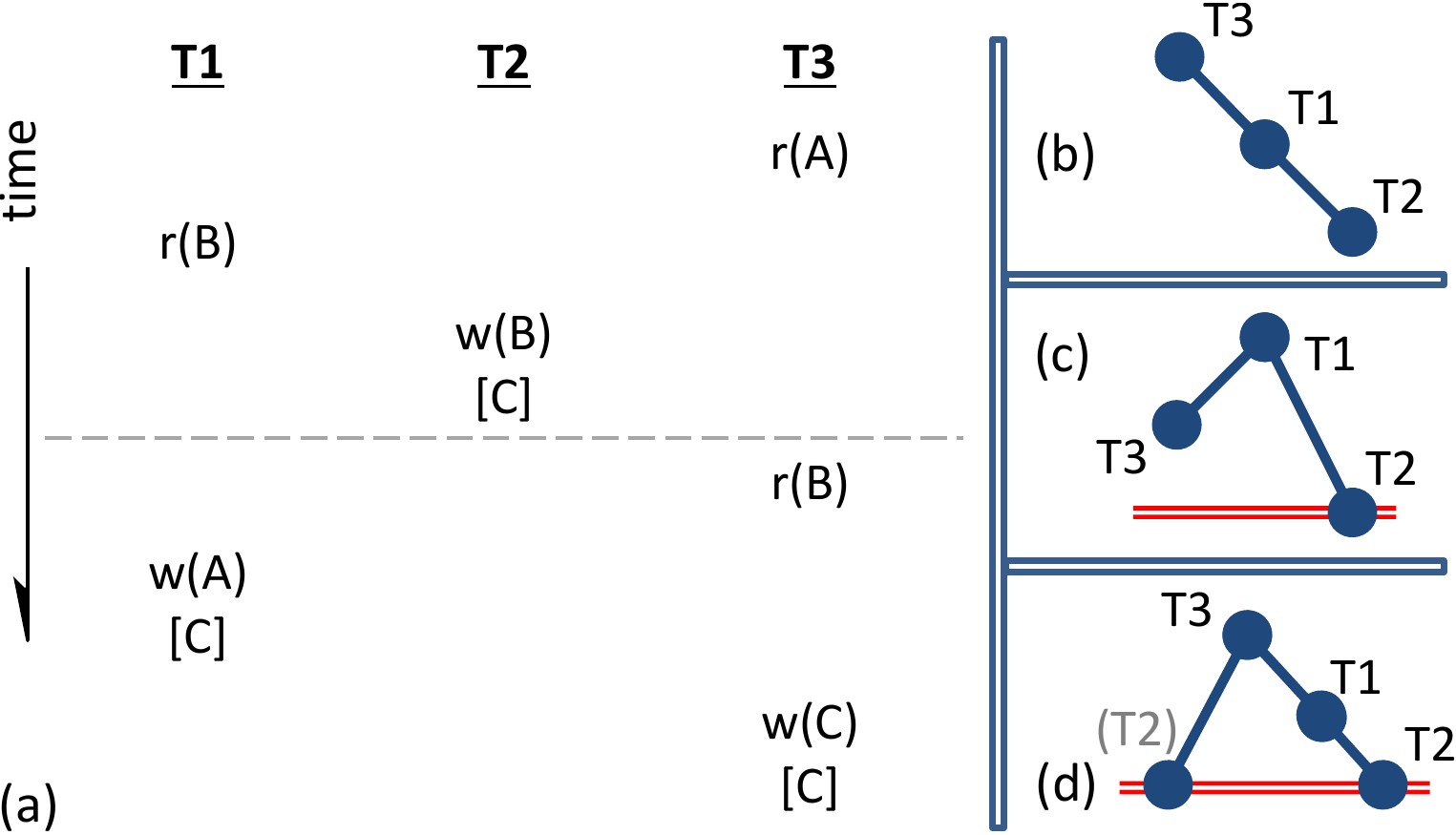}
\caption{A pathological scenario that will deadlock under 2PL but be serializable under SI (a); 
the resulting serial-temporal plot when T3 commits last under SI (b) 
and when T1 commits last (c); the serial-temporal plot that results under RC (d).
The horizontal and vertical axes of (b)--(d) represent dependency and commit orders, respectively.}
\label{fig:nasty-swimlane}
\end{figure}

2PL will deadlock: $T1$ reads B, blocking $T2$ which in turn blocks $T3$.
Meanwhile, $T1$ blocks attempting to write A, which $T3$ has read-locked.
With SI-based schemes, $T3$'s read of B will return the original version
rather than the one produced in $T2$, so there will be a back edge $T3 \rdep
T2$ (as well as $T1 \rdep T2$). Thus SI-based certifiers that check for single
back edges will abort at least one of the transactions, and SSI will also
abort one due to the dangerous structure $T3 \rwdep T1 \rwdep T2$ where $T2$
committed first and $T3$ is not read-only. In contrast, SI+SSN safely allows
all three transactions to commit, with the dependency structure shown in
Fig.~\ref{fig:nasty-swimlane}(b). SI+SSN is not a perfect certifier, as it sometimes aborts
transactions unnecessarily: if $T1$ tries to commit last (after $T3$ commits),
then SI+SSN would abort $T1$ with a failed exclusion window test because
$\pi(T1) \leq c(T2) < c(T3) < c(T1)$, even though the schedule is actually
serializable. Fig.~\ref{fig:nasty-swimlane}(c) depicts the dependency structure for this case.
Finally, suppose the concurrency control is RC+SSN and the schedule is
that of Fig.~\ref{fig:nasty-swimlane}(a). Now, $T3$'s read of B sees the version
produced in $T2$'s write, and so the dependencies have a cycle, with 
$T2 \fdep T3$ as well as $T3 \rdep T1 \rdep T2$. 
Thus $\pi(T3) = c(T2)$; and because $T2$ is also a potential successor of $T3$,
$T2$ violates the exclusion window test for $T3$, and this will force $T3$ to
abort, thereby preventing the non-serializable execution.
Fig.~\ref{fig:nasty-swimlane}(d) shows the dependencies for this case.

Despite this pathological case being simple, it 
gives an intuitive explanation for why SSN works 
well compared to other schemes: most schemes identify and reject the 
existence of back edges (either singly or in pairs) as a necessary 
condition to close cycles. However, we have seen that the ``peaked'' 
deadly structure identified earlier is a more precise cycle-closing 
pattern that allows SSN to ignore a large fraction of harmless back edges while 
still detecting all harmful ones.

As a final observation, we expect write-intensive workloads 
to perform better under RC+SSN than
under SSI: A major source of transaction failures under SSI is temporal 
skew, where a transaction attempts to overwrite a version created after 
its snapshot. By allowing transactions to always access the latest version 
(except when forbidden by SSN), RC should lower the risk of encountering 
temporal skew in a short transaction. 
We show this effect in
Sect.~\ref{subsec:simulation-rw-ratio}.

\section{SSN protocols in multi-version systems}
\label{sec:implementation}
In this section, we describe how SSN can be implemented for multi-version
systems, including disk-based and main-memory optimized ones. Specifically, we
discuss how SSN processes each read, write and commit request. We assume that
each version and transaction is associated with some storage to store the
metadata SSN requires. To overlay SSN on top of a single-version CC scheme (e.g.,
Read Committed with Locking), one will need to store information in lock entries
as proxies for the versions and keep some locks (in non-blocking modes) longer
than the underlying CC would have done. This is similar to PostgreSQL's SSI
implementation~\cite{pg12}. We leave lock-based SSN as future work, and in this
paper we focus on multi-version systems.

The rest of this section first gives the basic SSN protocols that can be made
parallel using latches. We then describe how the protocols can be parallelized in
a latch-free manner that is suitable for modern main-memory databases.

\subsection{Basic protocols}
\label{subsec:basic-prot}
The basic protocols of SSN require space and computation linearly proportional
to the combined read/write footprints of all in-flight transactions, 
plus constant space per version. 
Each transaction should maintain its footprints using read and write sets, 
which contain all the versions read and written by the transaction, respectively. 
SSN summarizes dependencies between transactions using various timestamps that 
correspond to commit times. For in-flight and recently-committed transactions, 
these timestamps can be stored in the transaction's context. 
For older transactions, the timestamps can be maintained in versions without a 
need to remember which committed transactions were involved. 
SSN supports early detection of exclusion window violations before entering
pre-commit, aborting the transaction immediately if the arrival of a too-new
(too-old) potential predecessor (successor) dooms it to failure.

Suppose transaction $T$ created version $V$, while transactions $R$ and $W$ respectively 
read and overwrote $V$.
Then we can define $c(V) = c(T)$, $\pi(V) = \pi(W)$, and 
$\eta(V) = max\left(\left\{c(R) : T \fdep R \right\} \cup \left\{c(T)\right\}\right)$.
%\begin{align*}
% \hspace*{1.2cm}
%  c(V) &= c(T) \\
%  \pi(V) &= \pi(W) \\
%  \eta(V) &= max\left(\left\{c(R) : T \fdep R \right\} \cup \left\{c(T)\right\}\right)
%\end{align*}
These per-version timestamps maintained in each version record everything we need to implement SSN if transaction 
execution is single-threaded. Explicit dependency tracking (making transactions aware of 
other transactions) is only needed to avoid races between transactions that coexist in time, 
particularly those whose validation phases overlap.

Table~\ref{tab:ssn-metadata} summarizes the metadata 
(along with the corresponding initial values) 
which SSN tracks for each transaction $T$ and version $V$. 
Version-related states persist for the life of the version, while 
transaction states are discarded soon after the transaction ends. 
Although SSN increases per-version space overhead, we note that many MVCC implementations 
already track some of these values.\footnote{For example, PostgreSQL maintains the 
equivalent of \texttt{v.cstamp} and \texttt{v.prev}. 
In each version, SSN takes an extra of 16 bytes for \texttt{sstamp} and \texttt{pstamp}, 
assuming 8-byte stamps.
For one million versions, 
SSN needs in total less than 16MB of extra memory.
This is likely tolerable in today's systems with abundant memory and storage.}

\begin{table}[t]
\centering
\setlength{\tabcolsep}{3pt}
\begin{tabular}{l l c}
\hline
{\bf Value} & {\bf Meaning} & {\bf Initial value} \\\hline
\texttt{t.cstamp} & Transaction end time, $c(T)$ & 0 \\
\texttt{t.status} & In-flight/committed/aborted & in-flight\\
\texttt{t.pstamp} & Predecessor high watermark, $\eta(T)$ & 0 \\
\texttt{t.sstamp} & Successor low watermark, $\pi(T)$ & $\infty$ \\
\texttt{t.reads} & Non-overwritten read set & $\emptyset$ \\
\texttt{t.writes} & Write set & $\emptyset$ \\\hline
\texttt{v.cstamp} & Version creation stamp, $c(V)$ & ``invalid'' (0)\\
\texttt{v.pstamp} & Version access stamp, $\eta(V)$ & 0 \\
\texttt{v.sstamp} & Version successor stamp, $\pi(V)$ & $\infty$\\
\texttt{v.prev} & Pointer to the overwritten version & \texttt{NULL} \\\hline
\end{tabular}
\caption{Metadata required by SSN.}
\label{tab:ssn-metadata}
\end{table}

\subsubsection{Interactions with the underlying CC}
As we have discussed previously, 
it is the underlying CC that dictates which version a transaction should see. 
Therefore, SSN's read and write protocols (\texttt{ssn\_read} and \texttt{ssn\_write} 
functions in Algorithm~\ref{alg:ssn-rw}, respectively) 
receive a reference to the version returned by the underlying CC 
as a parameter.
It is up to the underlying CC to employ the appropriate synchronization mechanism 
to guarantee correct interactions among threads. 
For example, in an SI implementation the worker thread could walk through the
version chain to find the latest committed version that is visible to the transaction, 
and then pass the desired version to \texttt{ssn\_read}. 
The SI implementation could indicate that a version is not yet committed by 
storing the creator transaction ID (TID) in the version's commit timestamp field. 
Readers who see a version with a TID in the commit stamp field will skip and 
continue to examine the next available version. 
Upon commit, the creator transaction will transform the TID to the real commit timestamp. 
Some recent systems~\cite{hekaton,ERMIA} follow this paradigm.
As a result, 
\texttt{ssn\_read} itself needs no extra synchronization protocol. 
It always reads a version that is already committed and made immutable by the creator transaction. 

Neither does \texttt{ssn\_write} need to handle concurrent writes itself: 
the underlying CC determines whether a new version can be appended, 
possibly by latching the version chain and compare transaction/version timestamps. 
If the transaction successfully installs a new version, 
as part of the underlying CC's write protocol, 
\texttt{v.prev} should point to the version that is overwritten. 
The transaction then proceeds to update the timestamps using \texttt{ssn\_write}.
The underlying CC ensures that the in-flight new version is invisible to concurrent reader transactions, 
e.g., by storing the creator's TID in the version's commit timestamp field as described earlier.

Different from the read and write protocols, 
SSN's commit protocol needs proper synchronization among transactions with overlapping footprints. 
We discuss the details following SSN's read and write protocols. 

\begin{algorithm}[t]
\begin{lstlisting}[language=python,morekeywords={invalid,valid}]
def ssn_read(t, v):
  if v not in t.writes:
    # update \eta(t) with w:r edges
    t.pstamp = max(t.pstamp, v.cstamp)

    if v.sstamp is infinity:
      t.reads.add(v)  # no overwrite yet
    else:
      # update \pi(t) with r:w edge
      t.sstamp = min(t.sstamp, v.sstamp)
    verify_exclusion_or_abort(t)

def ssn_write(t, v):
  if v not in t.writes:
    # update \eta(t) with w:r edge
    t.pstamp = max(t.pstamp, v.prev.pstamp)
    t.writes.add(v)
    t.reads.discard(v)  # avoid false positive
    verify_exclusion_or_abort(t)
\end{lstlisting}
\caption{SSN read and write protocols (for multi-version systems).}
\label{alg:ssn-rw}
\end{algorithm}

\subsubsection{Read}
Lines 1--11 of Algorithm~\ref{alg:ssn-rw} describe SSN's read protocol.
Besides the reading transaction \texttt{t}, 
it also receives a reference to the appropriate version returned by the underlying CC as a parameter. 
Transaction $T$ will record in \texttt{t.pstamp} the 
largest \texttt{v.cstamp} it has seen to reflect $T$'s dependency on the 
version's creator (line 4). To record the read anti-dependency from the transaction that 
overwrote $V$ (if any), $T$ records the smallest \texttt{v.sstamp} in 
\texttt{t.sstamp} (lines 9--10). 
As shown by line 7 of Algorithm~\ref{alg:ssn-rw}, if the version has not yet been 
overwritten, it will be added to $T$'s read set and checked for late-arriving 
overwrites during pre-commit. 
The transaction then verifies the exclusion window and
aborts if a violation is detected. 
The transaction will transition from in-flight status to the aborted status.
If the transaction is aborted, 
the safe retry property allows it to retry immediately, minimizing both wasted 
work and latency.

Note that $T$ does not track reads of versions it creates or overwrites, 
nor does it track reads if an overwrite has already committed 
(i.e., \texttt{v.sstamp} is valid). 
The read and write sets are currently implemented as simple arrays/vectors in our prototype. 
Further, the read set does not need to be searchable in order to enforce 
repeatable reads: SSN automatically enforces repeatable reads because a non-repeatable 
read corresponds to the cycle $T \rwdep W \wrdep T$. While a practical implementation would 
be well-advised to enforce repeatable reads by less draconian means, 
this is a matter of performance optimization, not correctness.

\subsubsection{Write}
\label{subsubsec:ssn-write}
The write protocol is shown by lines 13--19 of Algorithm~\ref{alg:ssn-rw}.
Note that \texttt{v} in 
\texttt{ssn\_write} refers to the new version generated by $T$. 
When updating a version, $T$ updates its predecessor timestamp {\tt t.pstamp} with 
{\tt v.prev.pstamp}. 
We use {\tt v.prev.pstamp} rather than {\tt v.prev.cstamp} because a write will never cause 
inbound read anti-dependencies, but it can trigger outbound read anti-dependencies 
(i.e., $R \rwdep T$, in which $R$ read $V$ before $T$ overwrote it). 
$T$ then records $V$ in its write set for the final validation at pre-commit (line 17).
If more reads came later, $T$ would update {\tt t.pstamp} with {\tt v.prev.pstamp}, 
which were updated by readers that came after $T$ but installed the new version {\tt v} 
before $T$ entered pre-commit.
Additionally, we must remove $V$ from $T$'s read set, if present: updating $\pi(T)$ using 
the edge $T \rwdep{} T$ would violate $T$'s own exclusion window and 
trigger an unnecessary abort. Sect.~\ref{subsubsec:finalize-pi} describes how we 
efficiently ``remove'' $V$ from $T$'s read set by skipping
processing $V$ when examining the read set, without having to make the read set
searchable.

\subsubsection{Commit}
\label{subsubsec:ssn-commit}
\begin{algorithm}[t]
\begin{lstlisting}[language=python]
def ssn_commit(t):
  t.cstamp = next_timestamp()  # begin pre-commit

  # finalize \pi(T) 
  t.sstamp = min(t.sstamp, t.cstamp)
  for v in t.reads:
    t.sstamp = min(t.sstamp, v.sstamp)

  # finalize \eta(T) 
  for v in t.writes:
    t.pstamp = max(t.pstamp, v.prev.pstamp)

  verify_exclusion_or_abort(t)
  t.status = COMMITTED  # post-commit begins

  for v in t.reads:  # update \eta(V)
    v.pstamp = max(v.pstamp, t.cstamp)
  for v in t.writes:
    v.prev.sstamp = t.sstamp # update \pi(V)
    # initialize new version
    v.cstamp = v.pstamp = t.cstamp
\end{lstlisting}
\caption{SSN commit protocol (for multi-version systems).}
\label{alg:ssn-commit}
\end{algorithm}
We divide the commit process into two phases: pre-commit and post-commit.
During pre-commit we first finalize $\pi(T)$ and $\eta(T)$, and then test the exclusion window. 
If the exclusion window is not violated, 
$T$ commits and the system propagates appropriate timestamps 
into affected versions during the post-commit phase. 
Pre-commit begins when $T$ requests a commit timestamp $c(T)$ in the in-flight status 
(set at transaction initialization), 
which determines its global commit order, as depicted by line 2 of Algorithm~\ref{alg:ssn-commit}. 
After initializing $c(T)$, $T$ is no longer allowed to perform reads or writes. 
It then computes $\pi(T)$, following the formula given in Section~\ref{sec:ssn}.
The computation only considers $\pi(V)$ of reads that were overwritten before $c(T)$.

The transaction next computes $\eta(T$) using a similar strategy, 
but must account for more dependency edge types. 
Recall that $T$ can acquire predecessors in two ways: 
reading or overwriting a version causes a dependency on the transaction that created it; 
overwriting a version also causes a dependency on all readers of the overwritten version. 
The read and write protocols account for the former by checking $c(V)$, 
and pre-commit accounts for the latter using $\eta(V)$.

Once $\pi(T)$ and $\eta(T)$ are both available, a simple check for $\pi(T) \leq \eta(T)$ 
identifies exclusion window violations. 
As shown by line 14 of Algorithm~\ref{alg:ssn-commit}, 
transactions having $\eta(T) < \pi(T)$ are allowed to commit, transitioning to the committed status. 
Otherwise, the transaction would abort with an ``aborted'' status and remove any new versions it installed during forward processing. 
The \texttt{pstamp} and \texttt{sstamp} maintained during forward processing 
and pre-commit are discarded, 
as if the transaction was never processed.
During post-commit, the transaction updates $c(V)$ for each version it created, 
$\pi(V)$ for each version it overwrote, and $\eta(V)$ for each non-overwritten version it read.
Post-commit is a clean-up and resource reclamation operation that does not cause extra 
transaction aborts. 
Versions created by a pre-committed transaction that is still in post-commit do not delay transactions. 
A concurrent transaction can access the versions and infer their timestamps by inspecting 
the corresponding pre-committed transactions' metadata.

The commit protocol described above allows parallel transaction execution but itself executes serially. 
The caller should hold a latch upon entering \texttt{ssn\_commit} and release the latch after finished 
executing the function.
The restriction is due to a number of races that can arise between acquiring $c(T)$, 
computing $\pi$ and $\eta$ for the transaction, 
and updating \texttt{sstamp} and \texttt{pstamp} for versions in the transaction's read and write sets.

\subsection{Latch-free parallel commit}
\label{subsec:parallel-commit}
Latch-based serial validation imposes an unacceptable scalability penalty,
as shown by recent research~\cite{jhf+13}, 
and especially so for modern main-memory optimized
systems~\cite{silo,FOEDUS,ERMIA,Deuteronomy,hekaton}. 
In the rest of this section, 
we describe how SSN's commit protocol can be parallelized in a latch-free manner for 
recent main-memory systems. 
Although not as significant as in main-memory systems, 
conventional disk-based systems can also benefit from our approach, 
with appropriate adjustment to a few assumptions described later. 
In this paper, we focus on main-memory systems.

\subsubsection{Main-memory OLTP systems}
\label{subsubsec:mmdbms}
The abundant amount of memory available in modern severs have lead to many recent 
main-memory OLTP systems~\cite{ERMIA,silo,FOEDUS,hekaton,Deuteronomy}. 
These systems assume that at least the working set (if not the whole database) resides in memory, 
thus allowing several important optimizations. 
First, a thread can execute a transaction from beginning to the end without any context switch. 
Heavyweight I/O operations are completely out of the critical path. 
In case the transaction needs to fetch data from storage, 
mechanisms such as Anti-Caching~\cite{AntiCaching} will abort and restart the transaction 
when all the data needed is available in memory. 
Second, main-memory OLTP systems utilize massively parallel
processors and large main memory more effectively, 
by using memory-friendly data structures and algorithms that utilize high
parallelism.
For examples, most main-memory systems dispense with centralized locking 
and co-locate the locks with records~\cite{VLL,FOEDUS}. 
Lock-free techniques are often used to obtain high CPU
utilization~\cite{BwTree,ERMIA}. 

We exploit atomic primitives provided by the hardware to devise our latch-free
parallel commit protocol.
In particular, we assume 8-byte atomic reads/writes and the availability of 
the {\tt compare\_and\_swap} (\texttt{CAS}) instruction;
both are supported by most modern parallel processors.
We also assume that during transaction execution, there will be no I/O
operations on the critical path and a thread will not be re-purposed.

\subsubsection{Finalizing $\pi$}
\label{subsubsec:finalize-pi}
As shown in Algorithm~\ref{alg:ssn-commit}, a committing transaction $T$ needs to calculate
$\pi(T)$ and $\eta(T)$ in a critical section (lines 4--11).
Recall that $T$ needs to update {\tt t.sstamp} with {\tt v.sstamp}, 
which is set by the transaction that overwrote $V$.
In the latch-based commit protocol (Algorithm~\ref{alg:ssn-commit}), 
only one transaction can examine {\tt v.sstamp} at the same time, 
and no concurrent write to {\tt v.sstamp} is allowed.
A latch-free protocol, however, must account for transactions that are concurrently
committing: {\tt v.sstamp} might be changed at any time by the overwriter $U$,
which might have acquired a commit timestamp earlier or later than $T$'s.
In case $U$ acquired a commit timestamp that is earlier than {\tt t.cstamp}, $T$ should
update its {\tt t.sstamp} with $U$'s successor low watermark (i.e., {\tt
u.sstamp}) if \texttt{u.sstamp} is smaller than \texttt{t.sstamp}.

As we have discussed in Section~\ref{subsec:basic-prot}, 
an SI implementation can indicate that a version is not yet available for reading by 
storing the owner transaction's TID on the version's \texttt{cstamp} field.
This approach is applicable to solving the above problem as well:
an updater $U$ should install its TID in the overwritten version 
$V$'s {\tt sstamp} after it successfully
installs a new version (i.e., as of the write before entering pre-commit).
Our implementation does this in \texttt{ssn\_write}. 
$U$ then changes {\tt v.sstamp} to contain {\tt u.sstamp} during $U$'s post-commit phase.
Consequently, a concurrent transaction might read an \texttt{sstamp} field and see 
a TID when reading a version or committing.\footnote{
Storing TID (before the overwriter finalizes) in the overwritten version's \texttt{sstamp} also eases 
the removal of updated versions from the read set 
(see the end of Section~\ref{subsubsec:ssn-write}). 
When iterating the read set, 
the updater $T$ simply skips versions whose 
\texttt{sstamp} points to $T$'s own TID.} 
For the former case, 
the transaction simply treats it in the same way as if \texttt{v.sstamp} contained~$\infty$ 
in \texttt{ssn\_read} 
(i.e., no overwrite yet) and adds the tuple to its read set. 
Lines 9--18 of Algorithm~\ref{alg:ssn-parallel-commit} show how the latter case is handled in 
our latch-free commit protocol. 
After $T$ detects a TID in {\tt v.sstamp} (line 9), 
it will obtain $U$'s transaction context (line 10) through a centralized transaction table indexed by TIDs.
Note that since \texttt{v.sstamp} might be changed to contain the overwriter's commit timestamp from its TID 
at any time, 
at line 8 we must read \texttt{v.sstamp} into \texttt{v\_sstamp} and determine if we should proceed 
to line 10 using \texttt{v\_sstamp}. 
Recall that by Definition~\ref{def:exclusion-window}, $\pi(T) < c(T)$. 
So if $U$ has acquired a commit timestamp earlier than {\tt t.cstamp}, $T$ has to wait for $U$ to 
conclude so that $U$'s successor low-watermark is stable.\footnote{There exists other
viable approaches other than spinning for $U$'s \texttt{sstamp} to become stable;
e.g., $T$ could deduce $U$'s low-watermark by helping with $U$'s 
pre-commit phase by iterating over $U$'s read set.
In experiments, we use spinning due to its simplicity and lightweightness.}
As line 12 shows, if $U$ has not entered pre-commit, then it will have a commit timestamp larger than $T$'s. 
Otherwise, we spin at line 13 until \texttt{u.cstamp} contains $U$'s commit timestamp in \texttt{u.cstamp}. 
Note that such spinning is necessary, 
because \textit{acquiring} a commit timestamp (line 3) and 
\textit{storing} it in \texttt{u.cstamp} are not done atomically by a single instruction. 
For example, \texttt{next\_timestamp} itself might draw a timestamp form a centralized counter using an 
atomic \texttt{fetch-and-add} instruction. 
Therefore, transactions entering pre-commit must first transition to a ``committing'' status, 
and then obtain a commit timestamp (lines 2--3). 
As a result, upon detecting $U$ has entered pre-commit (i.e., \texttt{status} is not \texttt{INFLIGHT}), 
\texttt{u.cstamp} is guaranteed to contain a valid commit timestamp eventually. 
Note that this ``committing'' status is not needed in the latch-based commit protocol 
(Section~\ref{subsubsec:ssn-commit}): 
only a single thread can execute \texttt{ssn\_commit} at a time; 
a transaction's status is not exposed to other concurrent transactions.

\begin{algorithm}[t]
\begin{lstlisting}[language=python,morekeywords={spin_while}]
def ssn_parallel_commit(t):
  t.status = COMMITTING
  t.cstamp = next_timestamp()  # begin pre-commit

  # finalize \pi(T) 
  t.sstamp = min(t.sstamp, t.cstamp)
  for v in t.reads:
    v_sstamp = v.sstamp
    if is_TID(v_sstamp):
      u = get_transaction(v_sstamp)
      # obtain u.cstamp
      if u.status is not INFLIGHT:
        spin_while(u.cstamp == 0)
        if u.cstamp < t.cstamp:
          # wait for U to finish pre-commit
          spin_while(u.status == COMMITTING)
          if u.status == COMMITTED:
            t.sstamp = min(t.sstamp, u.sstamp)
    else:
      t.sstamp = min(t.sstamp, v.sstamp)

  # finalize \eta(T) 
  for v in t.writes:
    for r in v.prev.readers
      if r.status is not INFLIGHT:
        spin_while(r.cstamp == 0)
        if r.cstamp < t.cstamp:
          spin_while(r.status == INFLIGHT)
          if r.status = COMMITTED:
            t.pstamp = max(t.pstamp, r.cstamp)
    # re-read pstamp in case we missed any reader
    t.pstamp = max(t.pstamp, v.prev.pstamp)

  verify_exclusion_or_abort(t)
  t.status = COMMITTED  # post-commit begins

  for v in t.reads:  # update \eta(V)
    pstamp = v.pstamp
    while pstamp < c.cstamp
      if CAS(v.pstamp, pstamp, t.cstamp):
        break
      pstamp = v.pstamp

  for v in t.writes:
    v.prev.sstamp = t.sstamp # update \pi(V)
    # initialize new version
    v.cstamp = v.pstamp = t.cstamp
\end{lstlisting}
\caption{Latch-free SSN commit protocol (for multi-version systems).}
\label{alg:ssn-parallel-commit}
\end{algorithm}

After obtaining \texttt{u.cstamp}, 
the protocol continues to check if \texttt{u.cstamp} is smaller than \texttt{t.cstamp}.
If so, $T$ needs to find out $U$'s final result (line 16): 
if $U$ indeed pre-committed, $T$ then updates {\tt t.sstamp} with {\tt u.sstamp} to finalize its
successor low-watermark (lines 17--18). 
If, however, $U$ acquired a commit timestamp later than {\tt t.cstamp}
or has not even started its pre-commit phase, 
$T$ can continue to process the next element in the read set without spinning on $U$ at line 16.

\subsubsection{Finalizing $\eta$}
We first note a fundamental difference between {\tt v.sstamp} and {\tt v.pstamp} that determines
how their calculations can be parallelized: during the lifetime of $T$,
$V$ could only have at most one successful overwriter, i.e., the transaction that 
installed a new version of $V$ and committed.
Before $T$ enters pre-commit, however, multiple concurrent reader transactions 
(denoted as ``readers'' for simplicity) could have read $V$ during $T$'s 
lifetime. Each of these concurrent readers will need to update {\tt v.pstamp} as shown by lines 16--17 of
Algorithm~\ref{alg:ssn-commit} (the latch-free version is described later).
In a latch-free commit protocol, as a result, $T$ will have to take into consideration all possible
concurrent readers that have obtained a commit timestamp earlier than {\tt t.cstamp}.
Essentially {\tt v.pstamp} becomes an array of the commit timestamps of all the readers of $V$.

A simple implementation might convert {\tt v.pstamp} directly to an array of timestamps, one per
predecessor, bloating the amount of metadata needed per version. To make the readers-tracking efficient,
we use a bitmap to summarize all of $V$'s readers. Each bit corresponds to one reader transaction.
As we have discussed in Section~\ref{subsubsec:mmdbms}, 
in most main-memory databases, transactions are rarely delayed (e.g., by I/O operations).
A worker thread processes only one transaction at a time, 
and there is roughly one worker thread per CPU core.
We use this fact and correspond the $i$-th bit in the bitmap to thread/transaction $i$.
Whenever a thread reads $V$ on behalf of a transaction $R$ for the first time, the thread registers itself
by setting the corresponding bit in {\tt v.readers}. R clears the bit after it concludes. 

Fig.~\ref{fig:tx-table} shows an example of three threads, each executing on behalf of a different transaction.
In this example, Thread~1 created tuple version \texttt{v1} and has already committed.
Thread~0 appended a new version, \texttt{v2}, after \texttt{v1} but has not yet entered pre-commit.
At the same time, Thread~2 read \texttt{v1} and registered itself in \texttt{v1.readers} by
setting the third least significant bit. As a result, when Thread~0 tries to commit, it will be able
to locate the transaction being executed by Thread~2 by following \texttt{v1.readers} and locate
its \texttt{cstamp} in the transaction-thread mapping table.

\begin{figure}[t]
\includegraphics[width=\columnwidth]{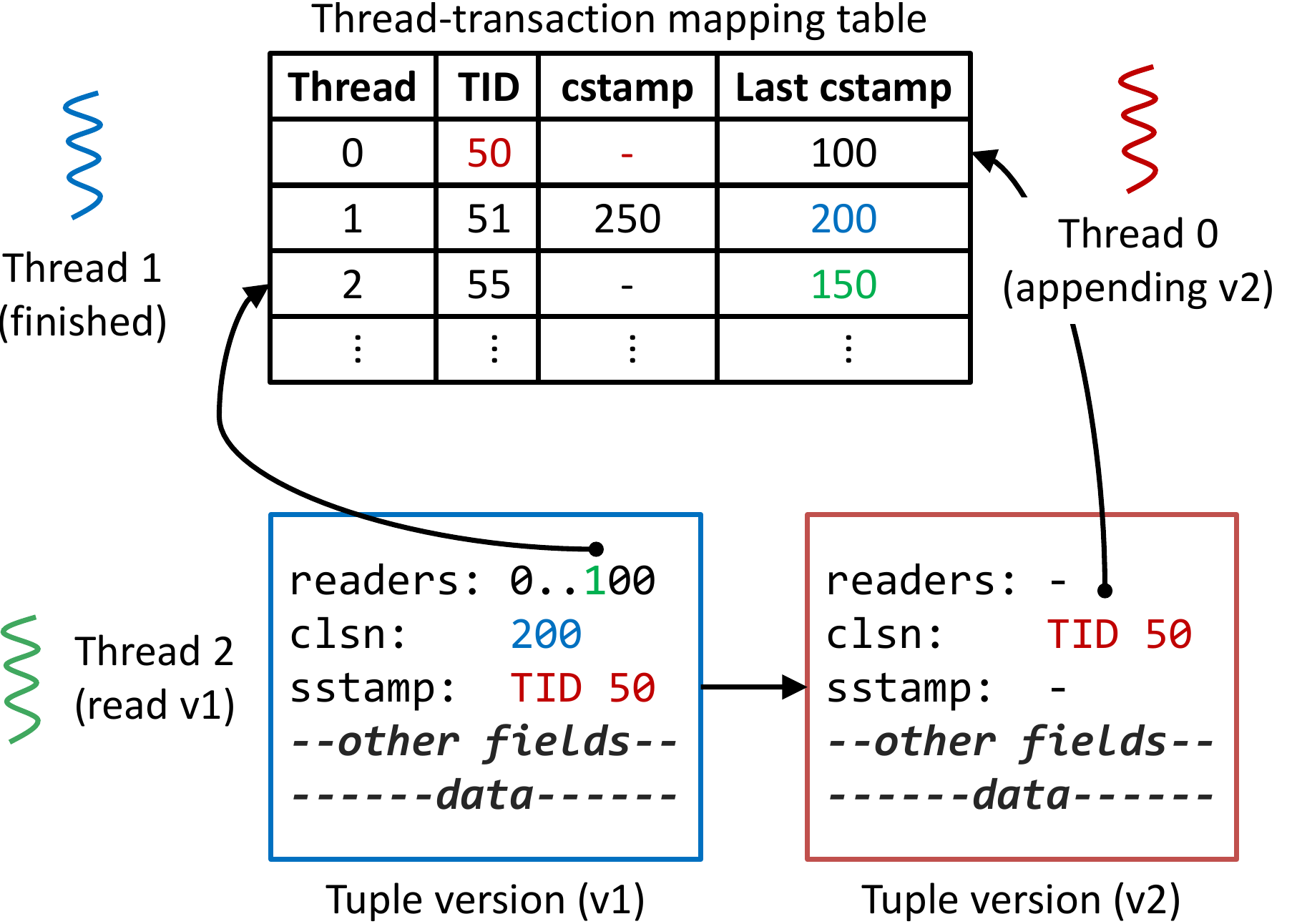}
\caption{The bit positions in the readers bitmap serve as indexes to the centralized transaction table
which records details on the transaction that is being run by each thread (we discuss the use of
the ``last cstamp'' field later in Sect.~\ref{subsec:read-opt}).}
\label{fig:tx-table}
\end{figure}

With a readers bitmap in each version, $T$ is able to examine all the concurrent readers of $V$
to finalize {\tt t.pstamp}. The details are described by lines 23--32 of Algorithm~\ref{alg:ssn-parallel-commit}.
For brevity, in the algorithm we omit the details of extracting the thread ID from the readers bitmap.
Our current implementation uses the bit scan reverse (\texttt{BSR}) instruction available on x86 platforms~\cite{IntelManual}.
As shown by lines 24--30, for each overwritten version, $T$ examines and waits
for each reader that acquired an earlier commit timestamp to finish pre-commit, 
using the same spinning machinery introduced earlier for finalizing $\pi$. 
If the reader successfully committed, $T$ will
update {\tt t.pstamp} using {\tt r.cstamp}.

If we accounted concurrent readers in an array, garbage collection is needed to remove unneeded readers metadata from $V$
(after the overwriter committed). The list of readers serves as a history of all transactions
that have read $V$. It suffices for $T$ to go though the list for finalizing $\eta$.

In the bitmap-based approach, however, $T$ has to make sure the reader is indeed the predecessor
that read $V$. It is possible that the {\it real} reader $R$ has already left and the same bit position
now points to a totally different transaction, which may or may not have read $V$. As a conservative estimate,
$T$ has to also consult {\tt v.pstamp} to catch such cases after going through all the concurrent readers
using {\tt v.readers} (line~32).
Theoretically, this approach could make $\eta(T)$ larger (hence more false positives) because
a newer reader might update {\tt v.pstamp} with its commit timestamp.
In practice, our evaluation in ERMIA reveals that the benefits of using a bitmap outweighs the drawback.

\subsubsection{Post-commit}
After $\pi$ and $\eta$ are finalized, $T$ tests the exclusion window and aborts 
if necessary (line 34 
in Algorithm~\ref{alg:ssn-parallel-commit}). $T$ then starts the post-commit phase 
to finalize the creation of new versions it wrote
and timestamps of existing versions it read.
As lines 37--42 of Algorithm~\ref{alg:ssn-parallel-commit}
show, $T$ will have to compete with other readers to set {\tt v.pstamp}, so that {\tt v.pstamp} is 
no less than {\tt t.cstamp}. Finalizing {\tt v.sstamp} is straightforward: $T$ simply updates it with
{\tt t.cstamp} and change its type from ``TID'' to ``timestamp'' (line 45). The initialization of new versions
(line 47) is the same as the serial commit protocol described in Algorithm~\ref{alg:ssn-commit}.

\section{Reducing SSN overheads}
\label{sec:reduce-overhead}
SSN requires space and time proportional to the transaction's footprint.
The metadata (e.g., $\pi(T)$ and $\eta(T)$) associated with each version 
and transaction incurs more storage overhead, and post-processing requires 
time proportional to the amount of transaction-private state kept. 
SSN requires pre-commit work proportional to the combined 
size of the read and write sets. 
In particular, the work required for examining the
read set will become a concern for long, read-only and read-mostly 
transactions.

In the rest of this section, we first discuss how SSN can leverage features that
are available in most existing systems to reduce some of the above mentioned overheads.
We then propose two optimizations specifically designed to reduce
the overhead of handling reads.
We adapt the safe snapshot~\cite{pg12} to free read-only transactions from dependency
tracking, and propose an optimization for read-mostly workloads that can avoid
tracking reads of cold data.
With these two optimizations, the vast majority of read-tracking is eliminated,
while serializability is still guaranteed.
The work required at commit time becomes much less for (usual) cases where the write 
set is much smaller than the read set, 
allowing a high-performance implementation of SSN.

\subsection{Leveraging existing infrastructure}
Out of the four machine words SSN maintains in each version, most MVCC implementations 
are already tracking two of them: the version creation timestamp ({\tt v.cstamp}) 
and a pointer to the overwritten version ({\tt v.prev}). 
These are respectively needed to control snapshot visibility and to allow transactions 
to retrieve the appropriate version.
SSN can therefore utilize existing infrastructure, leaving only \texttt{v.pstamp} 
and \texttt{v.sstamp} as new overheads. 
However, we observe that SSN never needs both two values at the same time. 
The \texttt{pstamp} is set at version creation and updated by any reader that commits 
\textit{before} the version is overwritten.
No transaction will access {\tt v.pstamp} once an overwrite of $V$ commits. 
Meanwhile, the overwriting transaction sets \texttt{v.sstamp} when it commits, 
and all subsequent readers will use it to update their own successor timestamps.
We can thus store both fields in a single machine word.  
If the remaining machine word is still objectionable, further space savings could 
be achieved in implementation-specific ways (such as storing a delta that occupies 
fewer bits), but we will not discuss such approaches further here.

As shown respectively by lines 14 and 28 in Algorithms~\ref{alg:ssn-commit} 
and~\ref{alg:ssn-parallel-commit}, a transaction $T$ is considered ``committed'' if 
it survived pre-commit.
The versions $T$ wrote immediately become visible to other transactions (depending on
the underlying CC's visibility policy).
Before $T$ finishes its post-commit phase, readers can use the TID stored in the version
written by $T$ to look up $T$'s status and complete the read if the underlying CC allows; 
the indirection is only used until post-commit converts the TID to a proper
timestamp in each version (described in Section~\ref{sec:implementation}).

\subsection{Safe snapshots and read-only queries}
\label{subsec:safesnap}
In systems that can provide read-only snapshots, including SI-based and
some single-version systems \cite{FOEDUS,silo,hyper}, SSN supports a variant of the 
``safe snapshot'' \cite{pg12}: a transaction 
known to be read-only can avoid the overhead of SSN completely, by using a snapshot 
that the system guarantees will not participate in any serial anomalies. 

The original safe snapshot design was a \textit{passive} mechanism: a query takes a snapshot and 
then waits until all in-flight transactions have ended, while monitoring the system for 
unsafe accesses. If no unsafe accesses occurred before the last in-flight transaction committed, 
the snapshot is deemed ``safe'' and can be used without further concern.
This approach requires tracking transaction footprints after commit, 
and can lead to long delays that make it most suitable for large read-only queries 
executing for tens of seconds or longer.

We instead propose an {\em active} mechanism:
when requested, the system forcibly takes a safe snapshot, with its timestamp 
stored in a global variable. SSN treats the snapshot as a 
transaction that has read every record in the database, inflicting a read
anti-dependency on all update transactions that were in-flight at snapshot creation time.
Update transactions can still overwrite versions in the safe snapshot,
but will abort if they also take a read anti-dependency on a version created before the snapshot.
Reads not using the safe snapshot are unaffected by it.
Simulations suggest that active safe snapshots have minimal impact on abort rate, 
even under heavy contention, unless the time between safe snapshots is less than the expected duration 
of an update transaction (see Sect.~\ref{sec:eval} for details).

Ports and Grittner \cite{pg12} also describe a read-only optimization for SSI that applies 
to SSN over SI: a transaction that enters pre-commit with an empty write set can 
set $c(T)$ to its snapshot time, thus keeping {\tt v.pstamp} smaller and reducing 
(sometimes significantly) the likelihood that a subsequent overwrite will trigger 
an exclusion window violation. Unlike with SSI, however, implementing 
this ``read-only'' optimization with SSN actually 
protects {\em update} transactions from read-only predecessors. 
Protecting writers is more helpful anyway, as read-only transactions would 
normally use the very lightweight safe snapshot mechanism we propose above.

\subsection{Read-mostly transactions}
\label{subsec:read-opt}
SSN relies on version stamps to implicitly track (part of) the dependency graph to test exclusion window violations.
As we have discussed previously, this mandates the tracking of full transactional footprints.
Reads that are not overwritten upon access have to be kept in the transaction's read set
for verification later at pre-commit (line 7 of Algorithm~\ref{alg:ssn-rw});
writes also have to be tracked and get finalized at post-commit.
On the one hand, tracking writes is usually not a concern for OLTP workloads:
compared to the amount of reads a transaction performs,
writes are usually minority, unless the transaction is write-heavy
(in which case, however, it is usually short as well).
On the other hand, emerging heterogeneous workloads feature even more reads \textit{per transaction},
i.e., \textit{read-mostly} transactions \cite{ERMIA}.
It is not uncommon in these workloads that a much longer scan or series of 
reads are mixed with a small but non-empty write set.
Tracking and validating a large read set dominates the cost of SSN because the write set is tiny by comparison.
As we discuss in Sect.~\ref{subsec:eval-read-opt},
examining each of these reads during pre-commit is a major potential source of last 
level cache misses that drags down the system's performance.

It is worth noting that because these read-mostly transactions' read set is 
\textit{much larger} than their write set, it is unlikely for most
reads to be overwritten by concurrent update transactions.
We leverage this fact to optimize read-mostly transactions:
versions that are not overwritten recently (governed by a threshold) are deemed ``stale'' and 
are not tracked in the read set.

Eliminating the tracking of (potentially long) reads reduces the burden of readers
at pre-commit to verify their reads, however, this also brings challenges for
SSN to finalize $\pi$ and $\eta$ for testing exclusion window violations.
First, writers are unable to obtain up-to-date predecessor status because
the readers that skip read tracking will not update the \texttt{pstamp}
of each read version.
Second, it becomes impossible for readers to check whether the read versions are
overwritten by concurrent writers at pre-commit, because they are not tracked at all.
This fact has profound impact on the parallel commit protocol:
the happens-before relationship we relied on (when finalizing $\eta$) will not hold anymore.
In Algorithm~\ref{alg:ssn-parallel-commit} (line 27),
an updater only needs to wait for the commit result of the readers who entered pre-commit earlier;
those who entered pre-commit later than the updater will in turn spin on the updater (successor),
in case the updater has not finished pre-commit yet.
When we skip tracking certain reads, however,
there is no chance for a reader to spin on its successor who entered pre-commit earlier---it 
even does not have a chance to know the existence of such updaters.
Allowing a reader to commit without properly accounting for its successors will potentially lead to non-serializable execution.

To solve these problems, we again leverage the fact that main-memory systems execute
a transaction on a single thread from beginning to the end without migrating among threads.
We allow read-mostly transactions to commit without having to stamp each read version,
but instead require each thread record the {\tt cstamp} of the read-mostly transaction 
on whose behalf the thread has committed.
As shown in Figure~\ref{fig:tx-table}, the reader puts its {\tt cstamp} in the 
\texttt{last\_cstamp} field in its private entry in the translation table upon commit.
The \texttt{last\_cstamp} field essentially serves as a proxy that summarizes the commit
stamps of all the read-mostly transactions on the same thread.
Because of the lack of read tracking, readers will likely to have larger $\pi$ values,
leaving more back edges unaccounted for and becoming more unlikely to be aborted.
Since readers might not update {\tt pstamp}, it then becomes the updater's responsibility
to detect non-serializable schedules.

At pre-commit, the updater examines the readers bitmap and uses set bits to
find the status of each thread. Note that a reader still indicates its existence in
the readers bitmap upon version access, but without clearing it upon conclusion---the read 
is not tracked in the first place.
The updater needs to consult \texttt{last\_cstamp} to find out the last committed reader's
{\tt cstamp} when calculating {\tt t.pstamp}.

We note, however, {\tt t.pstamp} calculated in this way is conservative 
and can admit false positives: a reader that set but did not clear the bit position
in {\tt v.readers} might lead the updater to inspect another completely irrelevant transaction---one
that has non-overlapping footprints---and cause unnecessary aborts.
Suppose transaction $T$ being executed by thread $t1$ read $V$ without tracking it.
$T$ would have set the bit position in {\tt v.readers} and commit by setting $t1$'s last cstamp
to {\tt t.cstamp} without clearing the bit in {\tt v.readers}.
After $T$ committed, transaction $U$ running on thread $t2$ overwrote $V$ and entered pre-commit.
Suppose $t1$ now starts another transaction $R$ whose footprint does not overlap with $T$ or $R$.
During pre-commit, $U$ would follow {\tt v.readers} to find $R$, because it inherited the bit that was
used by a previous reader $T$, although $R$'s footprint does not overlap with $U$'s.
According to Algorithm~\ref{alg:ssn-parallel-commit}, 
depending on the final $\pi$ and $\eta$ values calculated, $U$ might abort unnecessarily.
We expect that such false positives are not a major concern for workloads with significant portions of
long, read-mostly transactions where reads are the majority.

Using a thread-private \texttt{last\_cstamp} as a proxy that accumulates
\texttt{pstamp}s solves only half of the problem: an updater can account for
read-mostly transactions that entered pre-commit earlier, but not those that
entered pre-commit later. Recall that a read-mostly transaction might
\textit{not} spin on its successor (line 20 of
Algorithm~\ref{alg:ssn-parallel-commit}) with a smaller \texttt{cstamp} because
of the lack of read tracking. Consequently, the updater (as a successor) will
then have to figure out the read-mostly transaction's state when it discovered
that a concurrent reader exists through the readers bitmap. Otherwise, the
updater would have to blindly abort, which will make it hard to commit
write-intensive transactions that have overlapped footprint with read-mostly
transactions, especially when the read-mostly transaction is much longer.

Therefore, it would be desirable for the writer to update the read-mostly
transaction's \texttt{sstamp} (using the updater's \texttt{cstamp}) during
pre-commit; the reader proceeds as usual and tests for exclusion window violation
at the end of pre-commit. We employ a lightweight locking mechanism for the
\texttt{sstamp} to guarantee correctness: the most significant bit (MSB) of
\texttt{sstamp} serves as a lock; the \texttt{sstamp} value updated when its MSB
is unset is guaranteed to be taken into account by the reader. The updater
issues a \texttt{CAS} instruction to update the reader's \texttt{sstamp} with its
\texttt{cstamp}, expecting the MSB is 0. The reader should atomically set the MSB
of \texttt{sstamp} to 1 (e.g., by using an atomic \texttt{fetch-and-or}
instruction) right before it tests for exclusion window violation at the end of
pre-commit. An updater that failed the \texttt{CAS} because the MSB is set will
have to abort. In this way, we reduce unnecessary aborts of updaters, although
more heavy read-mostly transactions might be aborted than without this
lightweight machinery. Our empirical evaluation in
Sect.~\ref{subsec:eval-read-opt} reveals that this effect is minimal, and SSN can
still achieve a transaction breakdown that is close to the
specification under a variant of the TPC-E \cite{TPC-E} benchmark.

\section{Locks and phantom avoidance}
\label{sec:phantoms}
The description of SSN in the previous section works with per-transaction read
sets and write sets, with the assumption that these sets contain versions of
records. To ensure full serializability, e.g., repeatable counts, we need phantom
protection, or the ability to prevent insertions that would change the results of
an uncommitted query. Systems based on pessimistic concurrency control employ
several useful concepts such as hierarchical locking and lock escalation to
reduce tracking overheads, plus key and predicate locking
approaches~\cite{EswaranGLT76,gray78,Mohan90} for phantom prevention in ordered
indexes such as B-trees. SSN is compatible with those mechanisms to prevent
phantoms and so guarantee full serializability. To the extent that the underlying
CC implementation is already phantom-free (as is the case for many recent
systems~\cite{ERMIA,JohnsonPHAF09,FOEDUS,silo}), we will not need to redo them in
SSN. Otherwise, the following subsections describe how to incorporate phantom
detection into the SSN protocol.

\subsection{Hierarchical dependency tracking}
We first adapt the idea of hierarchical locking to SSN's dependency tracking
needs. In a traditional lock-based system, the database is organized as a
hierarchy: schemas, tables, pages, and records. Transactions acquire a concrete
lock on the finest-grained object that suits their needs, and {\em intention
locks} on the object's parents in the hierarchy. With a hierarchical locking
scheme in place, lock escalation also becomes possible: a transaction can choose
to replace a large number of fine-grained locks with a single coarse-level lock,
trading off reduced tracking overhead for an increased risk of conflicts.

We can adopt the same philosophy in SSN: a transaction that will read the
majority of a table can acquire a single read ({\tt R}) lock on the table, and
only needs to update the table-level {\tt pstamp}. Meanwhile, updating
transactions acquire {\tt IW} and {\tt W} locks on the table and individual
records, respectively. They update only {\tt v.sstamp} but must check both table-
and version-level {\tt pstamp}s to detect all conflicts.
Table~\ref{tbl:lock-mode-actions} summarizes the pre-commit checks and
post-commit updates required when the system supports intention modes.

If update contention by readers is a concern, either the lock or the
corresponding pseudo-version can be replicated, following the ``super-latching''
feature of SQL Server~\cite{SuperLatch}. A reader (common case) can then update
only one of many sub-versions, with the trade-off that a writer (infrequent) must
examine all of them.

\begin{table}[t]
  \centering
  \begin{tabular}{c c c}
    \hline
      {\bf Mode} & {\bf Check} & {\bf Update}\\
      \hline
      {\tt R} & {\tt V.sstamp} & {\tt R.pstamp} \\
      {\tt IR} & {\tt V.sstamp} & {\tt R.pstamp, V.pstamp} \\
      {\tt IW} & {\tt R.pstamp, V.pstamp} & {\tt V.sstamp} \\
      {\tt W} & {\tt R.pstamp, V.pstamp} & {\tt V.sstamp} \\
      \hline
  \end{tabular}
  \caption{Lock modes and their commit actions.}
  \label{tbl:lock-mode-actions}
\end{table}

\subsection{Predicates and phantoms}
In addition to preventing dependency cycles between reads and writes, a
serializable system must prevent the phantoms that arise if an insertion would
change the result of a scan. In a database with no installed indices, the
hierarchical lock system described above detects all phantoms: any scan---no
matter how selective its predicates---must access the entire table, and the
resulting table {\tt R} lock will conflict with the {\tt IW} locks of both
inserts and updates. However, predicates involving a (partial) index key mean
finer-grained range scans that access only a portion of the table. Phantom
protection can be achieved in these cases by locking the gaps between keys that
fall inside the range being scanned.

Several gap-locking schemes have been proposed~\cite{Mohan90,Lomet93}, and any of
them could be adapted for use with SSN. We describe here a variant of the scheme
due to Graefe~\cite{graefe07}, where each lock is a composite that can
independently reference a particular key and/or the gap that follows that key.
Both keys and gaps can be held in read and write mode, with conflicts tracked in
piecewise fashion. For example, pairing {\tt W/N} (key-write, gap none) with {\tt
N/R} (key none, gap read) does not imply any dependency edge, but {\tt W/N} and
{\tt R/R} implies an edge because both transactions accessed the key (there is no
conflict on the gap). The full action/mode table can be generated mechanically
(component by component) using Table~\ref{tbl:lock-mode-actions} as a starting
point, so we do not reproduce it here.

With locks that cover key/gap pairs, SSN can prevent phantoms without abandoning
the notions of read and write sets: when a transaction inserts into an index, its
write set contains a version for the key (probably its index entry) associated
with either a {\tt W/N} or {\tt N/W} lock, depending on whether the key was
already present. Meanwhile, the read set of a range-scan transaction contains
index entries it read, each associated with {\tt R/N}, {\tt N/R}, or {\tt R/R}
locks, depending on whether key, gap, or both fall within the scan's endpoints.
From there, the normal SSN machinery will see these new ``reads'' and ``writes'',
and check for exclusion window violations. 

\section{SSN in simulation}
\label{sec:simulations}
We implement the SSN protocol from Sect.~\ref{sec:ssn} in a discrete event
simulator\footnote{Code and scripts available at
\url{https://github.com/ermia-db/ssn-simulator}.} to examine SSN's accuracy and
impact over a wide variety of transaction
profiles, contention levels, and schedules. We are especially interested
in the impact of contention, interference among readers and writers in a mixed
workload, and the impact of active safe snapshots on writer abort rates. In the
next section, we implement parallel, latch-free SSN in ERMIA \cite{ERMIA} to
measure actual commit and abort rates with variants of the TPC-C and TPC-E
benchmarks. 

\subsection{Simulation framework}
\label{sec:simulator}
We have implemented in Python a discrete event simulator designed specifically to
evaluate CC schemes. We use it to compare the CC schemes listed in
Sect.~\ref{sec:background}, with and without SSN. The simulator allows us to
quantitatively compare supported concurrency levels and abort rates of different
concurrency control models. It also exposes anomalies that would indicate
potential design flaws. The simulator was invaluable not only in performing
evaluations and isolating bugs in the various models, but also in driving the
discovery and proof of SSN in the first place.

In all models, the simulator serializes write conflicts by blocking; 2PL and RCL
also block readers that conflict with writers. To bound delays and avoid
deadlocks, we apply a variant of wait depth limiting (WDL)~\cite{alex97}: the
system aborts any transaction that attempts to block on a predecessor that has
already blocked. Because all deadlocks necessarily involve transactions blocking
on blocked transactions, using WDL also also obviates the need for deadlock
detection. Under this arrangement, performance degrades much more gracefully
under contention than it would otherwise. This is especially important for 2PL,
which tends to ``freeze up'' once the combined transactional footprint of blocked
transactions encompasses a majority of the working set. Without WDL, the system
can suddenly enter overload and the expected wait times spike upwards by several
factors, increasing the aggregate transactional footprint even further in a
vicious cycle. With WDL, 2PL achieves drastically better performance than is
traditionally reported, both in terms of latency and completion rate. Meanwhile,
the effect on non-locking schemes is minimal. Even under the most extreme
contention, RC---whose failures are all due to WDL---has a commit rate better
than 90\%.

The simulation framework provides basic support for statistics and monitoring,
automatic detection of serial anomalies, scheduling, and multi-versioned data
access. Pluggable database models then implement the specific CC schemes,
including 2PL and RCL (which simply choose not to return overwritten versions).
The base simulator comprises $\sim$1200 LoC, and most models require 200-300
additional LoC (SI and 2PL are extreme cases, at 80 and 400 LoC, respectively). 

To ensure runs using different CC methods are comparable, we use an open queuing
system: each client submits requests at predetermined times (at intervals roughly
equal to expected transaction latency), independent of previous requests. This
models the real world of connection concentrators and users who do not coordinate
with each other before submitting requests. Thus, if two simulations are started
with the same random seed, the same transactions will be offered at precisely the
same times for both, independent of delays imposed by the CC model in use. Thus,
any difference in throughput, relative latency or abort rates is due to the
models themselves, not differences between transactions offered. Further, exact
reproducibility allows standard test case reduction tools to isolate problems
from a large simulation trace.

Finally, we point out one caveat: although the simulator models transaction
execution times, the low-fidelity timing model does not account accurately for
overheads and bottlenecks that would arise in a real system implementing these CC
schemes. We present the results only to show the relative timing and concurrency
characteristics of different CC schemes. A later section presents results for one
implementation of SSN in ERMIA, but an exhaustive performance study of optimized
implementations in multiple database engines is outside the scope of this paper.

\subsection{Microbenchmark description}
Our simulated evaluations use an enhanced version of the SIBENCH~\cite{crf09}
microbenchmark. The database consists of a single table and a fixed number of
records. Each record contains a single attribute that stores the TID of the
transaction that last wrote to it. Each transaction makes a random number of
accesses, selected uniformly at random from a tunable range of valid footprint
sizes. The last $m$ of those accesses are writes (with $m$ being a workload
parameter). Repeated reads, repeated overwrites, and blind writes are all
allowed. Mixed workloads can be emulated by instantiating multiple client groups
with differing parameters (e.g. to mingle large read-only queries with short
write-intensive transactions).

The benchmark logs the w:w or w:r dependencies implied by each access, using the
TID stored in each record to identify the predecessor. After the run completes, a
post-processing step reconstructs the r:w anti-dependency edges and tests the
resulting graph for strongly connected components (SCC). To avoid blaming one
cycle on multiple transactions (and to allow blaming multiple cycles on one
transaction), we only report transactions in an SCC as serialization failures if
they \textit{also} fail the exclusion window test (every SCC is guaranteed to
contain at least one such failure). This pruning strategy is quite effective in
practice, flagging only 1--2 transactions from a typical SCC involving 2-20
transactions, or up to dozen in SCC with hundreds of transactions. Nevertheless,
we recognize that this strategy overestimates the true number of serialization
failures, because SSN admits false positives.

Offline cycle testing is important for two reasons. First, it is completely
independent of the concurrency control mechanism used; the simulator does not
trust {\em any} CC scheme to be correct (with or without SSN). Second, long
chains of r:w anti-dependency edges can produce serialization failures that reach
arbitrarily far back in time, and a test with a limited horizon would fail to
detect such cycles.\footnote{For example, consider $T_1 \wrdep{} T_n \rwdep{}
\cdots T_i \cdots \rwdep{} T_1$, where each $T_i$ begins just before $T_{i-1}$
commits.} 

%\begin{figure}[t]
%\includegraphics[width=\columnwidth]{run-db-size.pdf}
%\caption{Performance vs. database size (30 clients).}
%\label{fig:run-db-size}
%\end{figure}

\begin{figure}[t]
\includegraphics[width=\columnwidth]{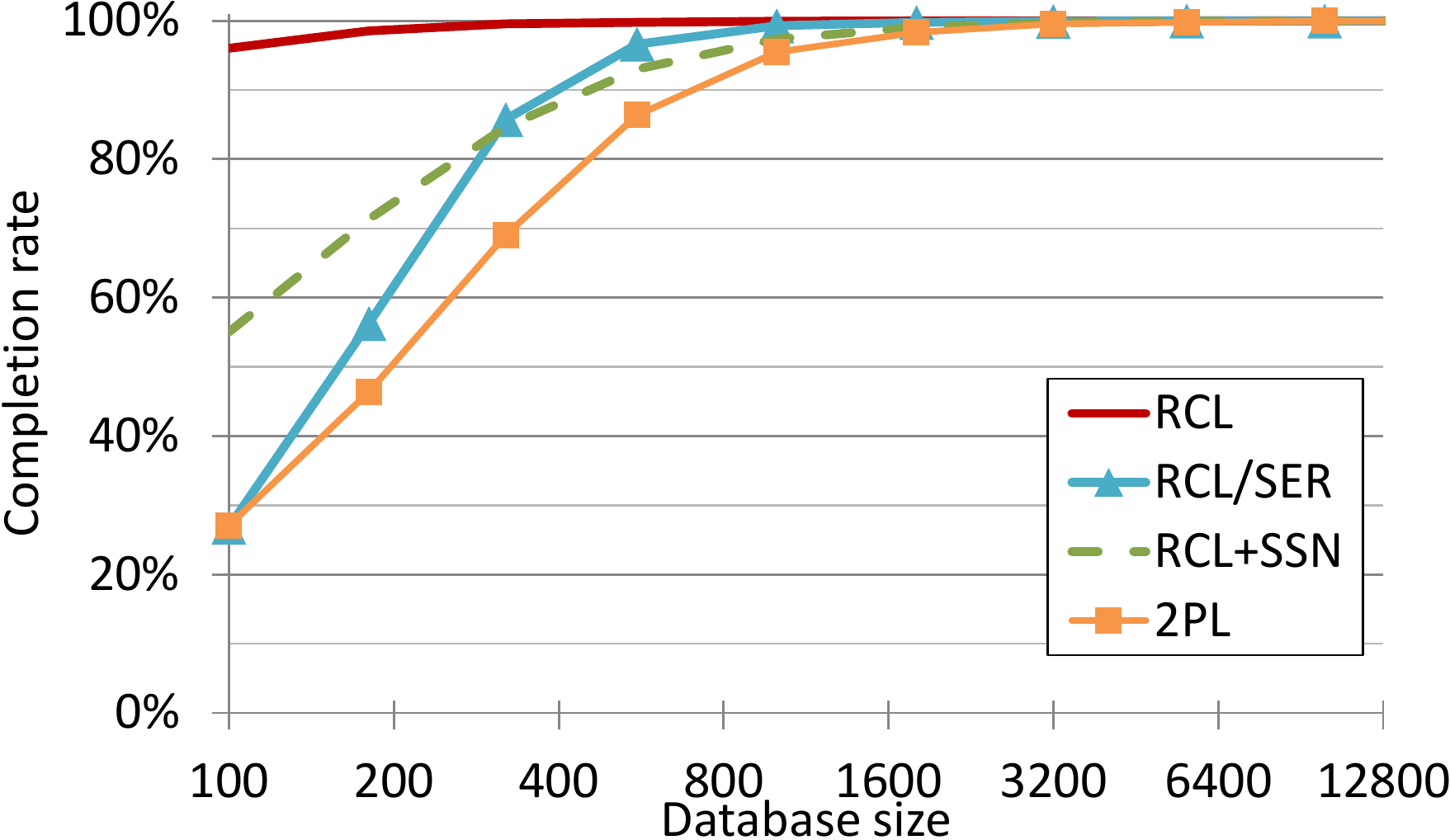}
\includegraphics[width=\columnwidth]{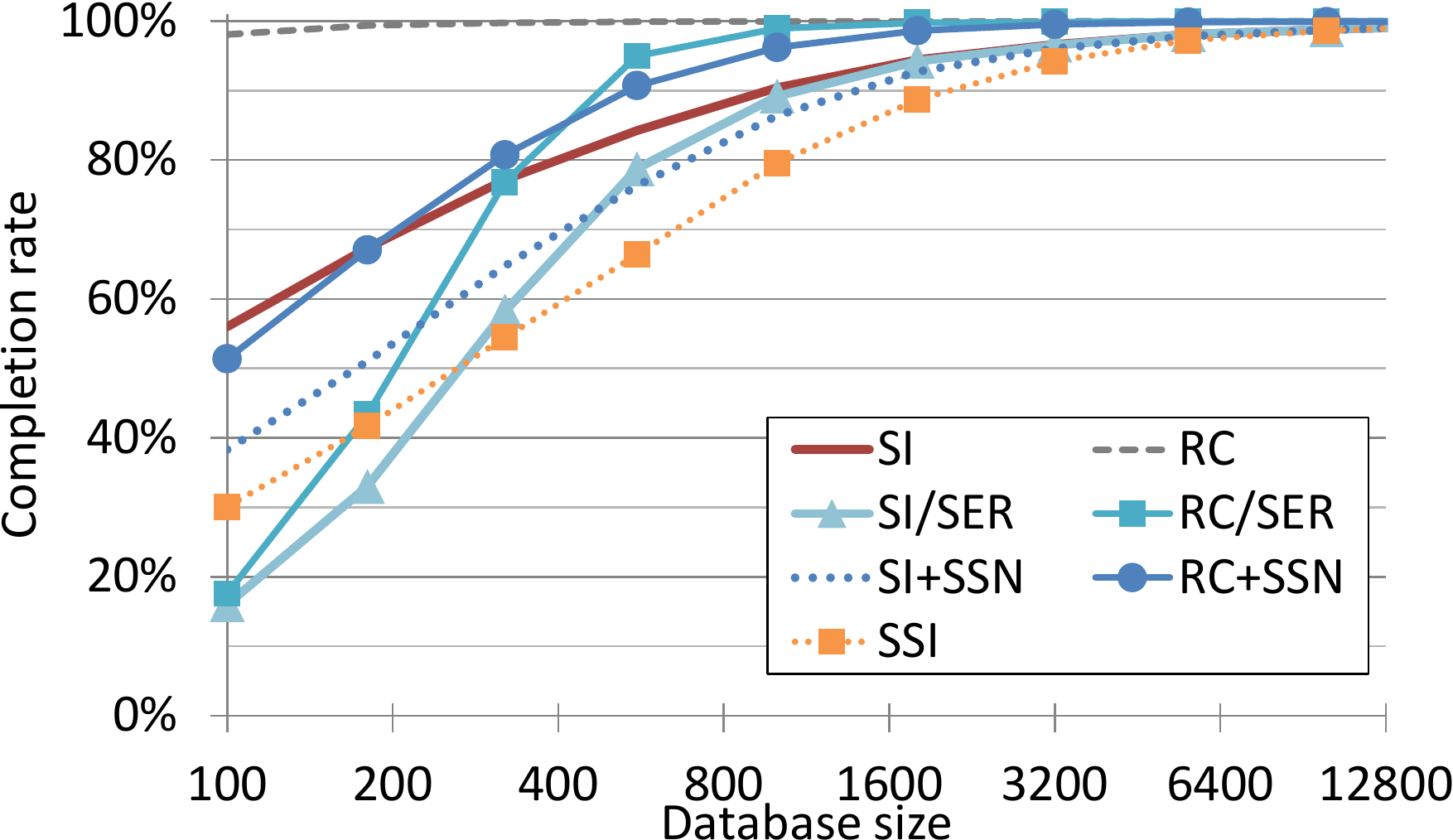}
\caption{The effect of contention for single-version (top) and multi-version (bottom) models,
with 30 clients.}
\label{fig:run-db-size}
\end{figure}

\subsection{SSN and other schemes under contention}
Our first experiment, shown in Fig.~\ref{fig:run-db-size}, calibrates
expectations. Each transaction makes between 8
and 12 accesses, with 25\% of them being writes. We fix the number of clients at
30. The figure shows completion rates of single-version (top) and multi-version
(bottom) CC schemes as the database size varies along the log-scale horizontal axis.
Contention decreases as the
database size increases. For reference only, we show the (non-serializable)
commit rates of SI, RC and RCL. We also show the effective commit rates of those
schemes if we subtract off the number of serialization failures the simulator
reports, given as SI/SER, etc.  Note that the numbers for RCL/SER and SI/SER are
highly unrealistic---requiring offline oracle to compute---and are for reference
only. 
However, the difference between SI/SER and SI, etc, provides information on
how many of the executions are committing with potential anomalies when a
non-serializable scheme is used.

We make two observations: first, SI suffers a lower commit rate because
transactions cannot overwrite versions outside their snapshot. This weakness
extends to SSI and SI+SSN as well. Second, the number of actual serialization
failures remains quite low until contention becomes severe, and gives a sense of
the false positive rates the other schemes produce (quite high for SSI, very low
for RCL+SSN). Finally, we note that protecting weaker CC schemes (RC and RCL)
with SSN yields significantly higher completion rates than any other approach,
across the full range of contention. RCL+SSN, in particular, sees 90\% or
better completion rates until the database size drops below 400 records. SSI
passes that point at 1600 records. Note that this workload averages an
aggregate transactional footprint of 300 records at any given moment. For a
100-record database, RCL+SSN has a completion rate above 50\%, even though three
or more transactions compete for each record. 
A completion rate of above 50\% is
relatively high for this setting,
 considering that every record a transaction
reads will have an active writer with high probability.

Overall, these results indicate that, for these workload parameters, a database
smaller than 500 records suffers severe contention while one larger than 5,000
records is nearly contention-free (though some schemes have non-negligible abort
rates even then). 

\subsection{Transactions with varying write intensity}
\label{subsec:simulation-rw-ratio}
One of the key benefits of multi-version schemes is that reads and writes need
not block each other. A secondary benefit---for read-only queries at least---is
the ability to access a stable snapshot. However, any transaction that makes at
least one update suffers a temporal skew under SI, where all reads occur at the
start of the transaction and all writes take effect at commit time.
Fig.~\ref{fig:run-rw-ratio} illustrates this vulnerability: we run 30 clients,
each making 100 uniformly random accesses against a database containing 100k
records. As the fraction of accesses which are writes increases to 100\% along
the horizontal axis, the SI and non-SI schemes are clearly differentiated, with
the latter all converging to a completion rate nearly doubles that of the
SI-based schemes. The SI schemes all converge to the same performance because
temporal skew is the primary cause of transaction failure. 
In contrast, schemes that always read the latest committed value (2PL, RC, RCL)
are much less vulnerable to temporal skew and consistently achieve better
completion rates. Note that the workload should have low-contention: all clients
together have an aggregate transactional footprint covering at most 3\% of the
database at any given time. 

\begin{figure}[t]
\includegraphics[width=\columnwidth]{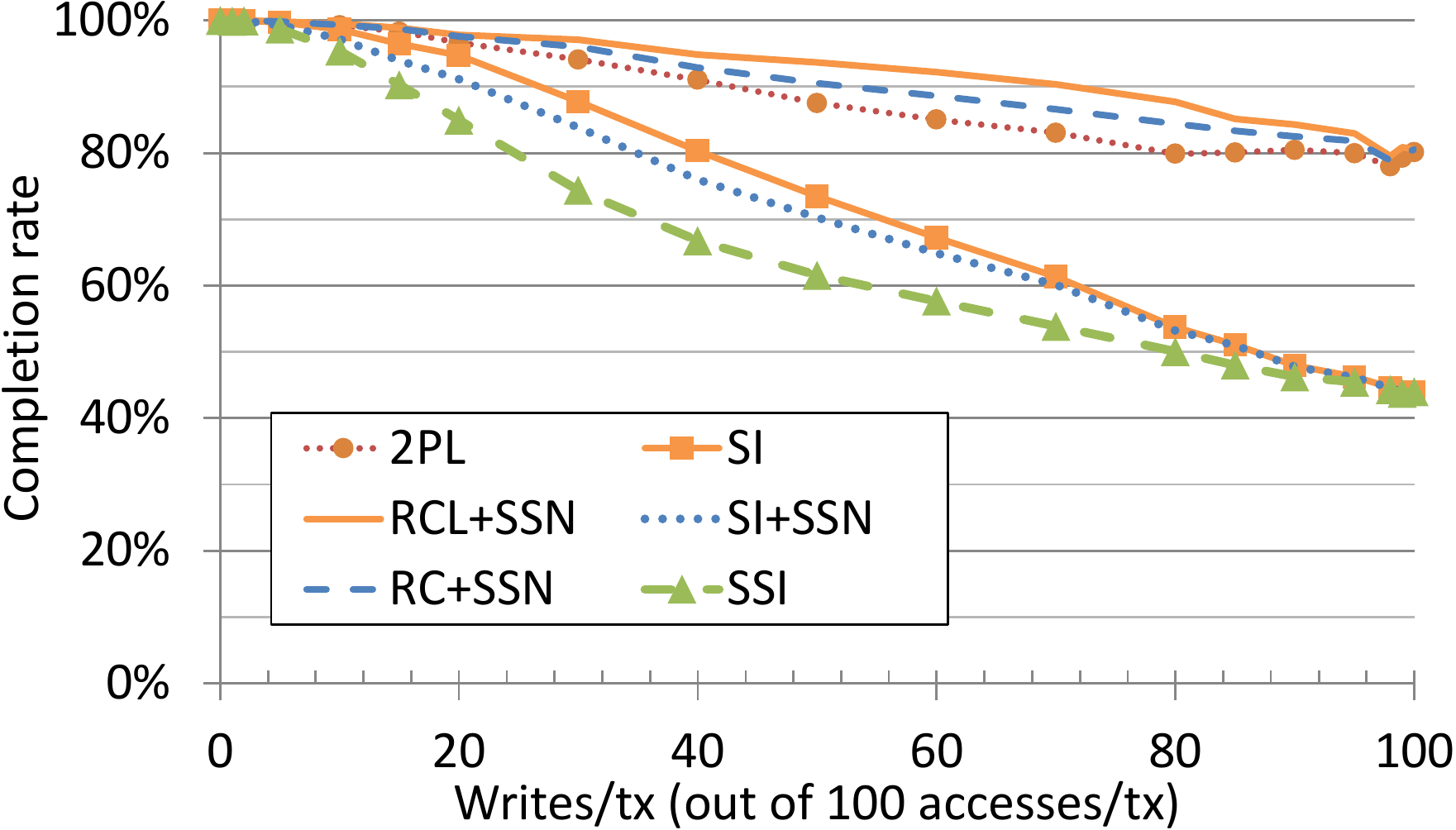}
\caption{Effect of write-intensive transactions.}
\label{fig:run-rw-ratio}
\end{figure}

%\begin{figure}[t]
%\includegraphics[width=\columnwidth]{run-query-suffering.pdf}
%\caption{Interference between ten update transactions and a varying number of
%read-only queries. 
%Both RC+SSN and RCL+SSN suffer extremely high abort rates for readers (two
%lines overlapped).}
%\label{fig:run-query-suffering}
%\end{figure}

\begin{figure}[t]
\includegraphics[width=\columnwidth]{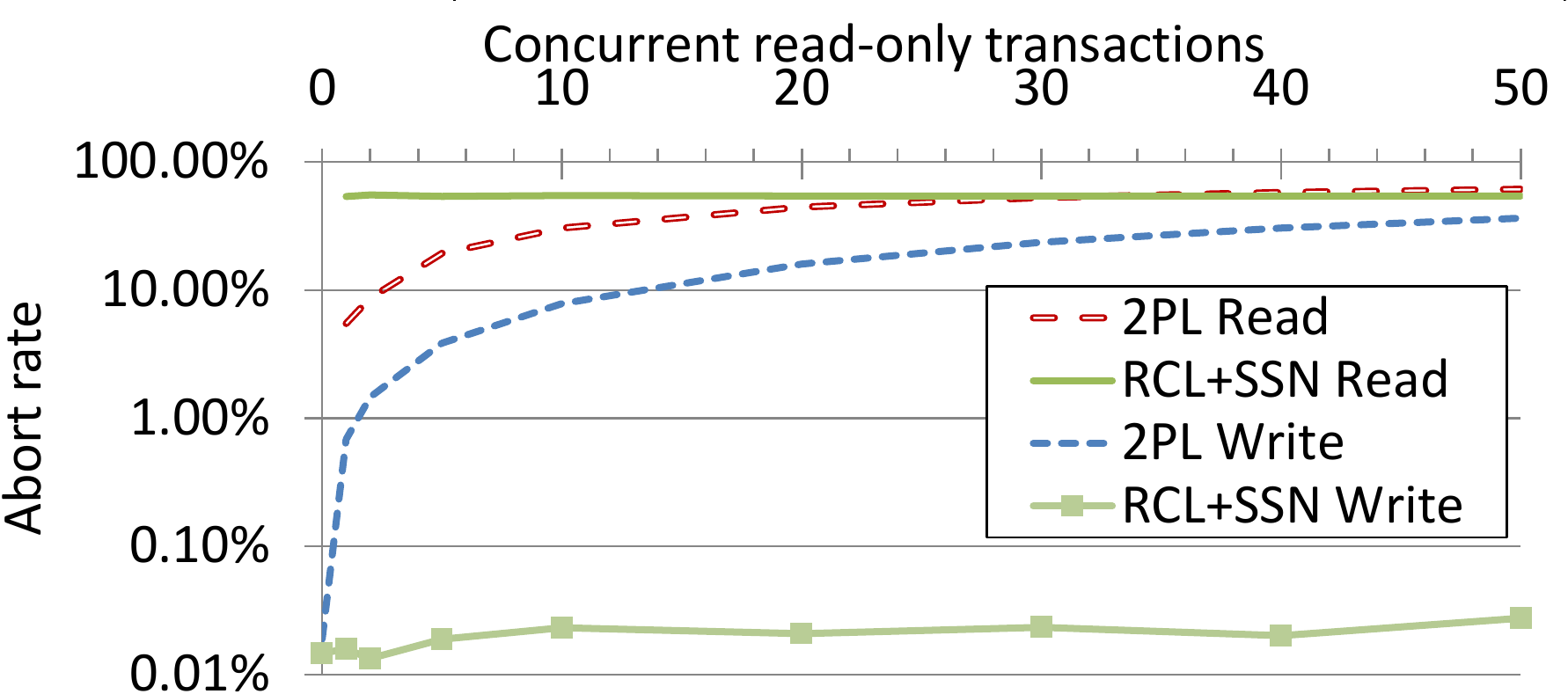}
\includegraphics[width=\columnwidth]{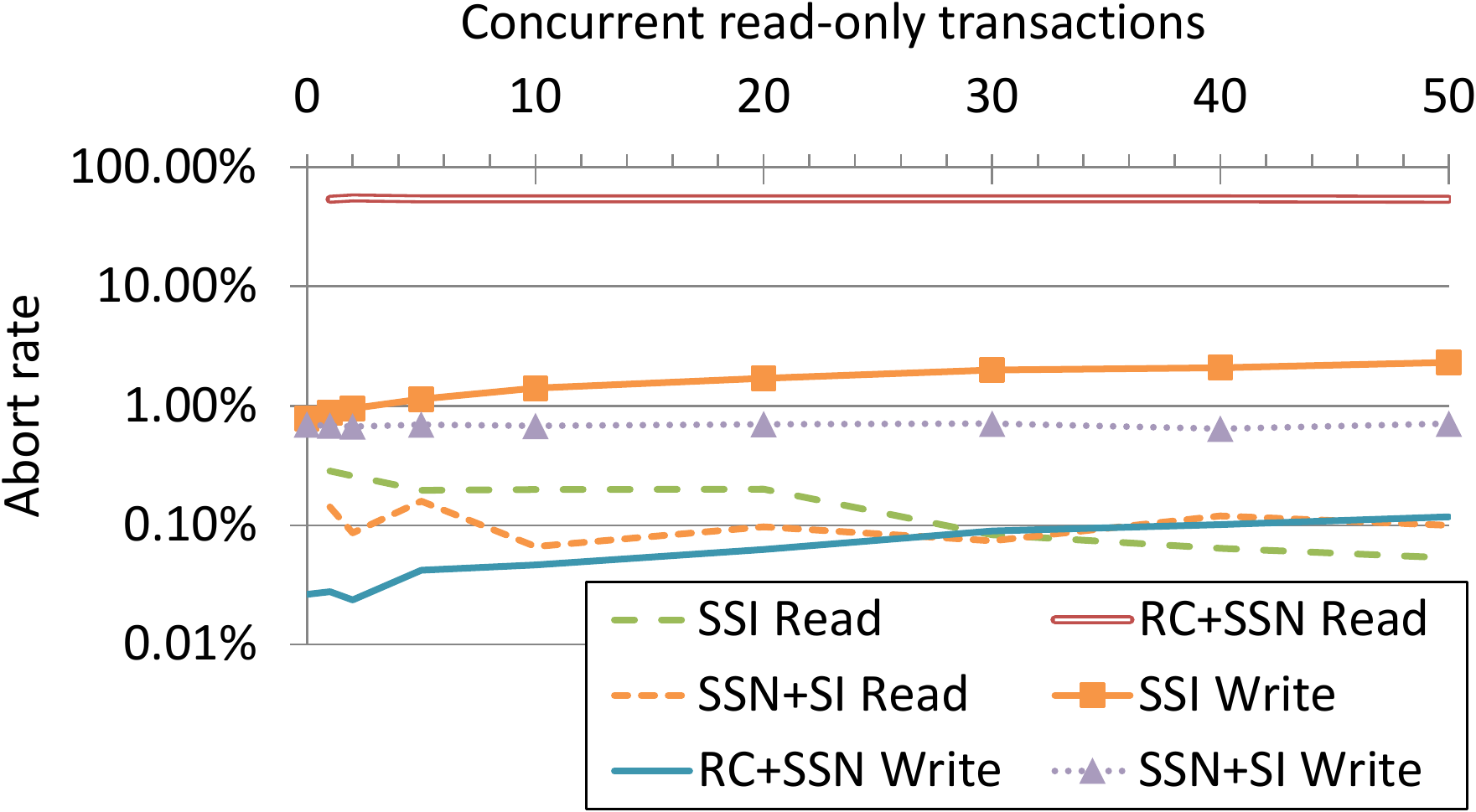}
\caption{Interference between ten update transactions and a varying number of
read-only queries under single (top) and multi (bottom) version models. 
Both RC+SSN and RCL+SSN suffer very high abort rates for readers.}
\label{fig:run-query-suffering}
\end{figure}

\subsection{Interference between readers and writers}
So far, all our simulations involve fairly update-intensive workloads, with
transactions of uniform size and no read-only queries. Lock-based approaches tend
to outperform more optimistic approaches under updates, in no small part because
MVCC is of little use to a writer (who must always overwrite the newest version
of a record). Indeed, we have seen that 2PL performs quite competitively in
update-intensive workloads. However, 2PL interacts very poorly with large
read-only transactions, as demonstrated in Fig.~\ref{fig:run-query-suffering}.
Here, we model a system with 10 update clients (denoted as class ``Write'' in the
figure) and a varying number of read-only clients (denoted as class ``Read'').
Each update client writes between 8 and 12 records, so the aggregate footprint of
the update clients is roughly 100 records.  The database contains 3000 records.
Thus, update clients collectively touch only 3\% of the database at any given
time. We vary along the horizontal axis the number of read-only clients and
measure the resulting abort rate (note the logarithmic vertical axis). Each
read-only client reads between 100 and 200 records (5\% of the database, on
average) before committing. We disable safe snapshots for both SSI and SSN in
this experiment.  This workload exhibits extreme contention under 2PL, with
reader and writer abort rates both quickly approaching 100\% as additional
queries overload the system. 
RC+SSN and RCL+SSN also suffer high abort rates
ranging from 53--55\% for readers across all experiments with readers, 
because the (already-long) query suffers additional
delays due to W-R conflicts that drastically increase the likelihood of a
non-repeatable read that will be aborted by SSN.
In contrast, SI-based models avoid non-repeatable reads, and so achieve completion
rates that suggest low contention: SSI and SI+SSN achieve better than 97\% completion rates
for updates and---thanks to its read-only optimization---99.9\% completion rates
for readers.

\begin{figure}[t]
\includegraphics[width=\columnwidth]{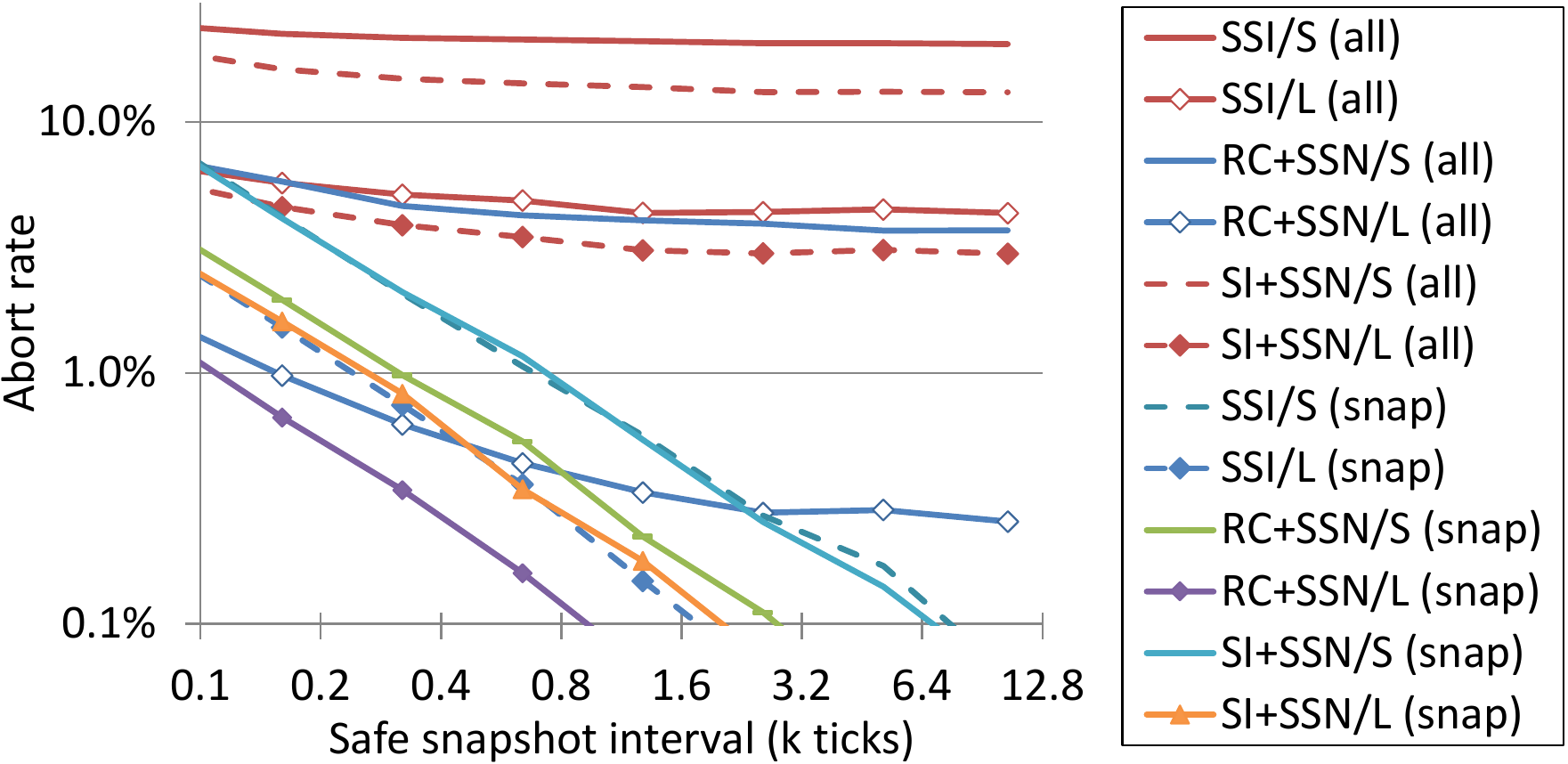}
\caption{Abort rate vs. safe snapshot frequency.}
\label{fig:run-snapkill}
\end{figure}

\subsection{Writer abort rate due to safe snapshots}
Finally, we examine the performance impact of our safe snapshot mechanism. Safe
snapshots forcibly aborts writers that would invalidate a snapshot, so we would
expect higher abort rates in return for reduced latency vs. the passive safe
snapshot described in prior work~\cite{pg12}. Fig.~\ref{fig:run-snapkill}
examines this trade-off, varying the frequency of safe snapshots along the
horizontal axis and plotting the resulting abort rate suffered by two 30-client
update workloads: the class ``S'' workload touches roughly 10 records per
transaction (in a 1000-record database), while the class ``L'' workload touches
40 (in a 4000-record database). Both workloads have a 3:1 read/write ratio, and
scaling the database size with transactional footprint produces similar
contention levels in both. We compare abort rates of SSI, SI+SSN and RC+SSN, for each of
two footprint sizes, differentiating between aborts due to safe snapshot
conflicts vs. other causes. Both horizontal and vertical axes are log-scale.

Even though the workload is rather contentious (aggregate transactional
footprint size is more than 30\% of the database), in most cases abort rates are
relatively low, 10\% or less. The fraction of aborts due to safe snapshot
conflicts drops exponentially as snapshots are taken less frequently. For both
transaction sizes, the snapshot kill rate drops to below 1\% once the delay
between snapshots matches or exceeds the expected update transaction latency
(note the 4$\times$ difference in snapshot interval, corresponding to the
4$\times$ difference in footprint size). Given that most read-only queries are
far larger than any update transaction (the latter tend to finish in a few ms at
most), fairly infrequent snapshots (every 10ms or so) will have virtually little
or no impact on writer abort rates or reader latency.

\section{SSN in action}
\label{sec:eval}
We have incorporated SSN in ERMIA \cite{ERMIA} to provide robust CC for
heterogeneous workloads.\footnote{Code available at
\url{https://github.com/ermia-db/ermia}.} ERMIA is a multi-version,
memory-optimized database system that uses a single atomic \texttt{fetch-and-add}
instruction per transaction to provide cheap global commit ordering, making it
amenable to various CC schemes, such as SI (with and without SSN) and SSI. ERMIA
prevents phantoms at low cost using its index (Masstree \cite{Masstree}). The SSI
implementation in ERMIA follows the parallel commit paradigm described in
Sect.~\ref{subsec:parallel-commit}. In this section, we focus on evaluating the
following:

\begin{itemize}
\item Performance of SSN and other comparing CC schemes under traditional OLTP
workloads (Sect.~\ref{subsec:eval-scale});
\item Impact of the optimizations for read-mostly transactions on heterogeneous
workloads (Sect.~\ref{subsec:eval-read-opt});
\item Effectiveness SSN's safe retry property and SSN's accuracy under high
contention (Sect.~\ref{subsec:eval-retry}).
\end{itemize}

\subsection{Benchmarks}
\label{subsec:eval-bench}
ERMIA implements a wide variety of benchmarks, including TPC-C~\cite{TPC-C},
TPC-E~\cite{TPC-E} and their extensions for different evaluation purposes. We
evaluate SSN and compare its performance with SI, SSI, and optimistic CC
(OCC)~\cite{KungR81} using these benchmarks available in ERMIA. We first use
TPC-C to explore how SSN performs for traditional OLTP workloads with low
contention. We also compare different CC schemes using TPC-CC, a more contentious
variant of TPC-C~\cite{ERMIA}. Finally, TPC-EH, a heterogeneous OLTP workload
(detailed in \cite{ERMIA}) that features long, read-mostly transactions is used
to evaluate the effectiveness of SSN's read optimization. Details of these
benchmarks are described below.

\textbf{TPC-C.}
The TPC-C benchmark simulates an order-entry environment and is the dominant
benchmark for traditional OLTP systems. It is a write-intensive, easily
partitionable and low-contention workload. We partition the database by
warehouse. Each thread is assigned a home warehouse; 15\% and 1\% of the Payment
and New-Order transactions are cross-partition, respectively. We run TPC-C to
compare the performance of different CC schemes under low contention.

\textbf{TPC-CC.}
As we have discussed above, the stock TPC-C benchmark exhibits low contention. To
evaluate SSN under high contention, we use TPC-CC, a variant of TPC-C implemented
in ERMIA that uses a random warehouse for each transaction~\cite{ERMIA}. Instead
of assigning each thread a home warehouse, a thread chooses a warehouse randomly
as its home warehouse upon starting a transaction. The percentage of remote
transactions for Payment and New-Order remain the same as in TPC-C.

\begin{figure*}[t]
\centering
\begin{subfigure}[t]{0.325\textwidth}
\includegraphics[width=\textwidth]{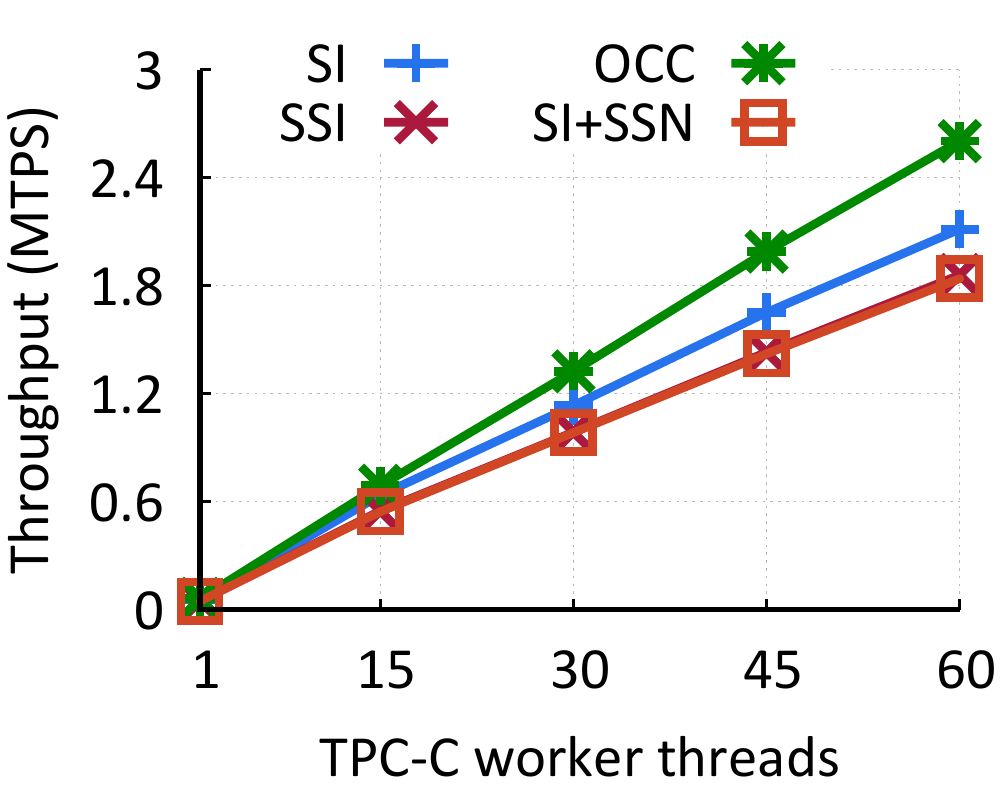}
\end{subfigure}\hfill
\begin{subfigure}[t]{0.325\textwidth}
\includegraphics[width=\textwidth]{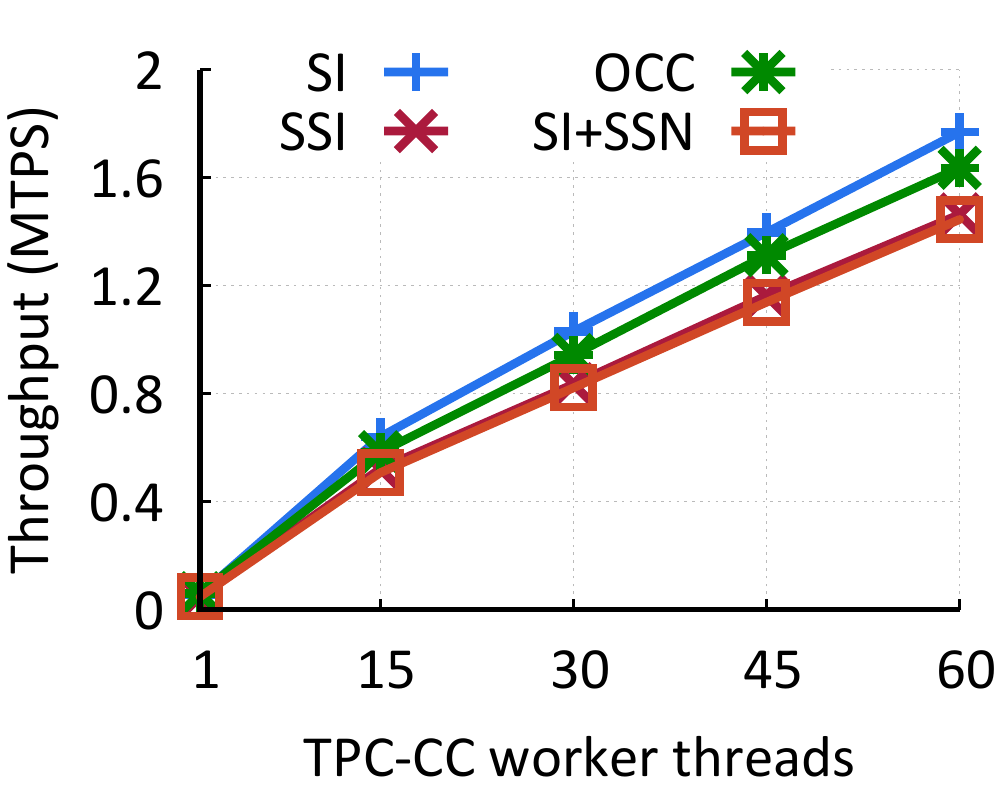}
\end{subfigure}\hfill
\begin{subfigure}[t]{0.325\textwidth}
\includegraphics[width=\textwidth]{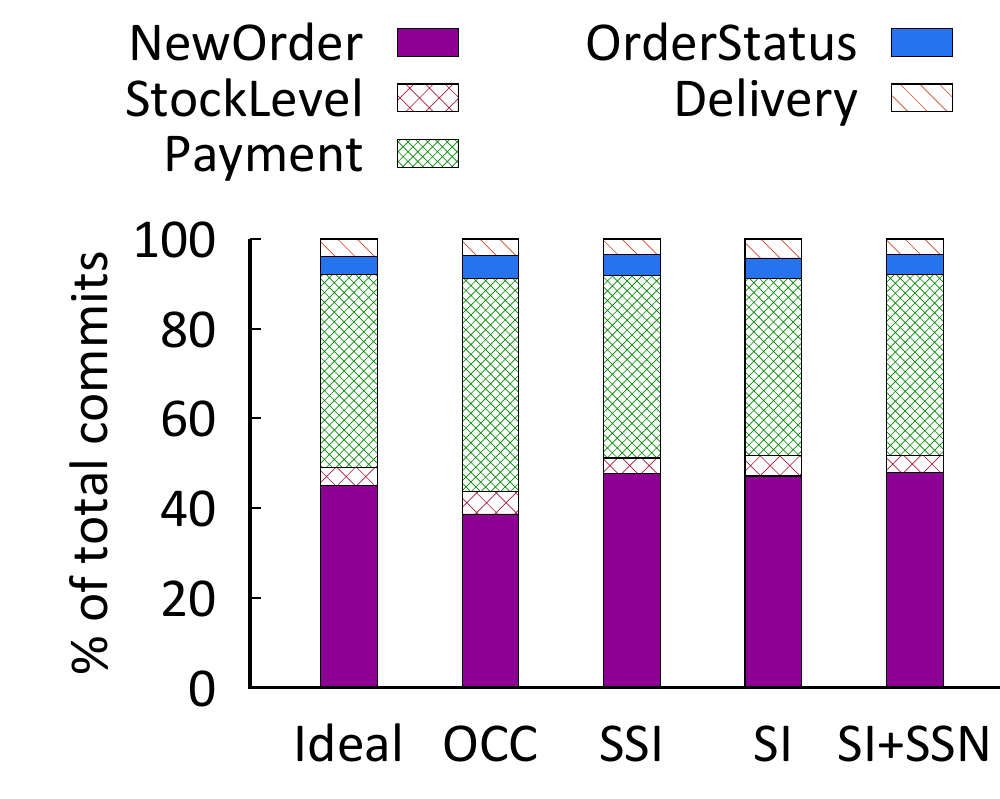}
\end{subfigure}
\caption{Commit throughput of TPC-C (left), TPC-CC (middle), and throughput
breakdown of TPC-CC running at 60 threads (right). Under low contention (TPC-C),
OCC outperforms all the other schemes. The gap between OCC and SSI/SSN shrinks
with more contention (TPC-CC). OCC favors write intensive transactions; other
schemes' profiles are similar to the Ideal.}
\label{fig:tpcc}
\end{figure*}

\textbf{TPC-EH.}
Compared to TPC-C, TPC-E \cite{TPC-E} is a more recent OLTP benchmark that
features more sophisticated and realistic tasks that are performed by brokerage
firms. It has a significantly higher read-to-write ratio ($\sim$10:1 vs.
$\sim$3:1 of TPC-C) \cite{TPC-EC-Comparison}. Although TPC-E models modern OLTP
workloads more realistically, it lacks the support for emerging heterogeneous
workloads, where the execution of \textit{long and read-mostly} transactions are
of paramount importance. TPC-EH \cite{ERMIA} fills this gap by introducing an
additional read-mostly transaction---Asset-Eval---to TPC-E, 
and extending the schema with an Asset-History table. Asset-Eval
aggregates assets for a set of customers and inserts the results to
Asset-History.
 For each customer account, Asset-Eval computes the total
asset by joining the Holding-Summary and Last-Trade tables. As a result,
Asset-Eval will contend mostly with the Market-Feed and Trade-Result
transactions, which modify the Last-Trade and Holding-Summary tables,
respectively. In our experiments, Asset-Eval scans 20\% of all the records in the
Customer-Account table.

The Asset-Eval transaction takes 20\% of the total transaction mix in TPC-EH. 
Because our goal is
to evaluate CC schemes under contention, Data-Maintenance and Trade-Cleanup are
omitted from our TPC-EH implementation.
The
revised transaction mix therefore becomes: Broker-Volume (4.9\%),
Customer-Position (8\%), Market-Feed (1\%), Market-Watch (13\%), Security-Detail
(14\%), Trade-Lookup (8\%), Trade-Order (10.1\%), Trade-Result (10\%),
Trade-Status (9\%), Trade-Update (2\%) and Asset-Eval (20\%). 

\subsection{Experimental setup}
We apply SSN over SI (denoted as SI+SSN) and compare it with other CC schemes,
including SI and SSI in ERMIA, and also with OCC. 
The OCC implementation used in our experiments is
Silo~\cite{silo}, a single-version, main-memory optimized system that uses a
decentralized architecture to avoid physical contention. There have been newer
systems that follow a similar philosophy to achieve even better performance, such
as FOEDUS~\cite{FOEDUS}. However, ERMIA shares the same benchmark code and
implementation paradigm with Silo (e.g., both use threads---instead of
\textit{processes} in FOEDUS---as transaction workers). Therefore, for fair
comparison, we use Silo in our experiments. The version of Silo we used is
augmented with the same TPC-C, TPC-CC and TPC-EH benchmarks in ERMIA.

We run the benchmarks described in Sect.~\ref{subsec:eval-bench} in Silo and
ERMIA under various CC schemes on a quad-socket Linux server with four Intel Xeon
E7-4890 v2 processors clocked at 2.8GHz (60 physical cores in total) and 3TB of
main memory. Each worker thread is pinned to a physical core. We keep all the
data in memory and direct log writes to \texttt{/dev/null}.

The performance numbers we report are averages of three consecutive 10-second
runs, each starting with a freshly loaded database. Unless explicitly stated, all
transactions aborted due to CC reasons (e.g., phantoms, exclusion window
violations and write-write conflicts) are dropped. In production environments,
these aborted transactions should be retried until they successfully commit. We
only avoid retrying to evaluate the fairness among transactions under different
CC schemes. User-instructed aborts (such as those found in TPC-E) are never
retried. For TPC-C and TPC-CC, the number of concurrent threads is fixed to the
scale factor (i.e., number of warehouses) unless otherwise stated. We use ten
working days and a scale factor of 500 for TPC-EH.

\begin{figure*}[t]
\centering
\begin{subfigure}[t]{0.325\textwidth}
\includegraphics[width=\textwidth]{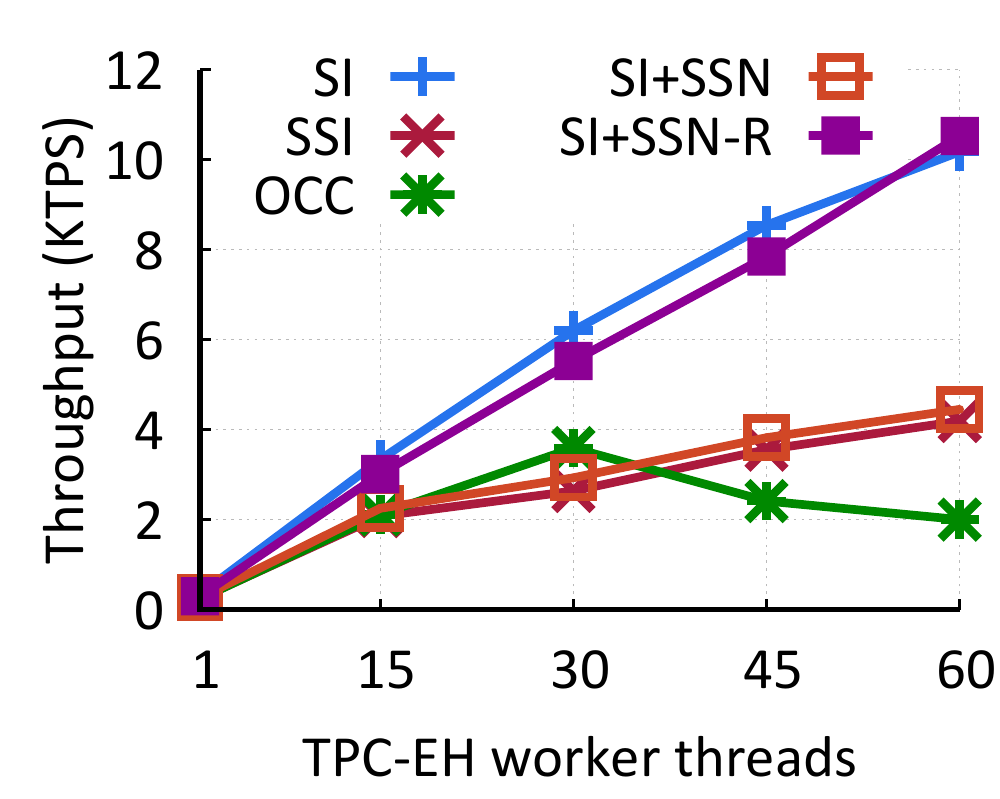}
\end{subfigure}\hfill
\begin{subfigure}[t]{0.6\textwidth}
\includegraphics[width=\textwidth]{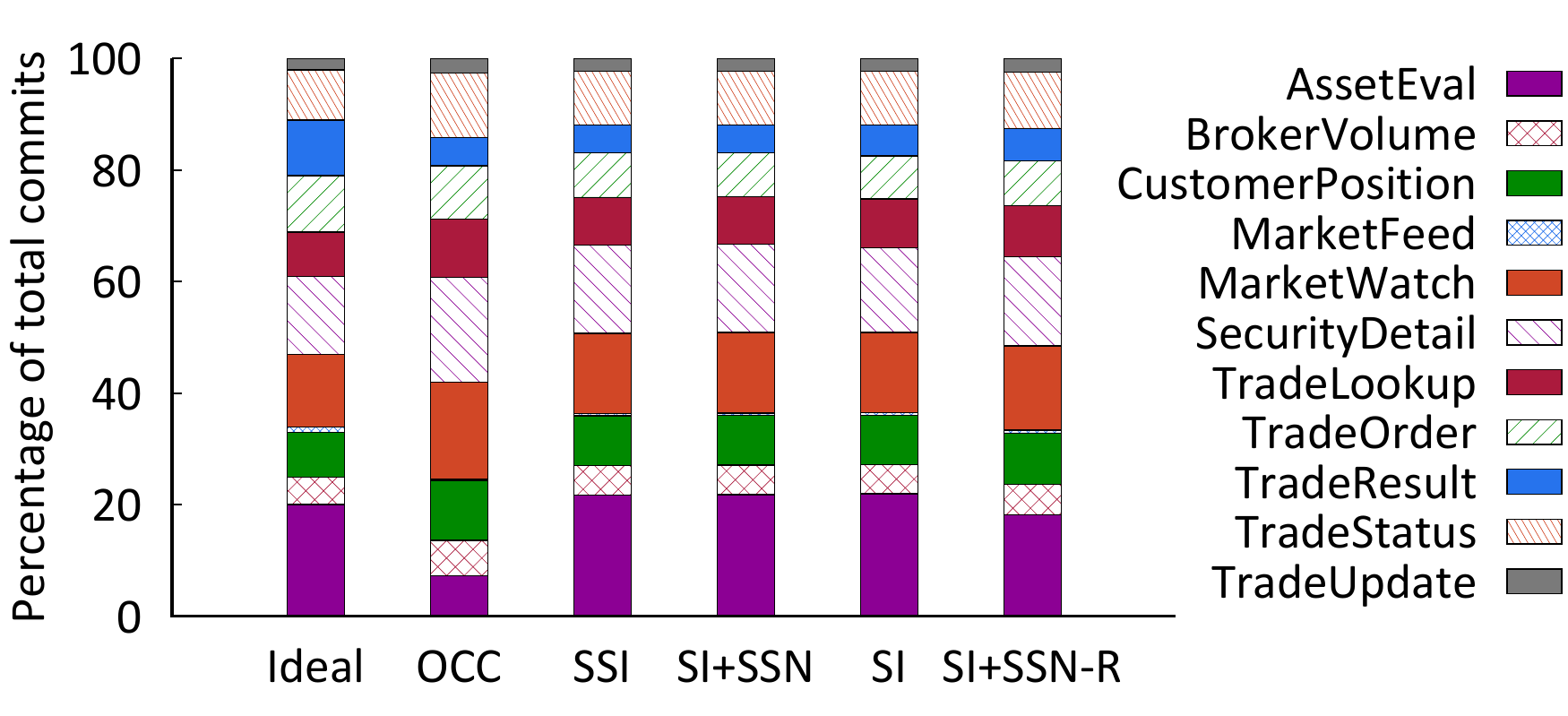}
\end{subfigure}\hfill
\caption{Commit throughput of the TPC-EH benchmark (left) and the throughput
breakdown under 60 threads (right).}
\label{fig:tpceh}
\end{figure*}

\subsection{Traditional OLTP workloads}
\label{subsec:eval-scale}
We first explore how SSN and other CC schemes perform under traditional OLTP
workloads, by comparing the throughput of TPC-C and TPC-CC under various CC
schemes. Fig.~\ref{fig:tpcc} shows the throughput of TPC-C (left) and TPC-CC
(middle) with a varying number of concurrent threads. The number of warehouses is
fixed to the number of concurrent threads. Note that with such setting, neither
TPC-C nor TPC-CC generates enough conflicts to stress the CC significantly.
Therefore, the purpose of this experiment is to understand how different CC
schemes perform for the most common and simple workloads. We explore how they
behave under more contention in Section~\ref{subsec:eval-retry}. SI outperforms
SI+SSN and SSI in all cases, however, it is not serializable. OCC outperforms the
other schemes under TPC-C, which has low contention. With random warehouse
selection in TPC-CC, the gap shrinks and OCC starts to perform similarly to SI.
SSI performs slightly worse than SI+SSN. OCC only marginally outperforms SI+SSN
under TPC-CC, showing the minimal overhead of SSN on top of the underlying CC
scheme.

To further understand how different types of transactions perform under SSN, the
vertical axis of Fig.~\ref{fig:tpcc}(right) presents the relative percentage of
each transaction's commit in the TPC-CC mix, for the different CC schemes in the
horizontal axis, including the transaction mix specified by the TPC-C
specification~\cite{TPC-C} (``Ideal'') for comparison. The experiment was
conducted with 60 threads and aborted transactions are dropped to show any bias a
CC scheme might have toward certain types of transactions. Among all the schemes
we evaluated, OCC has shown a bias toward the write-intensive Payment
transaction, but the other multi-version schemes have shown similar profiles to
Ideal. While this is expected as OCC is known to favor write-intensive
transactions, we emphasize that SI+SSN provides fair scheduling and has kept a
low abort rate, without deviating much from the workload specification. SSN does
not aggravate the underlying CC's bias. We further explore the behaviors of
different CC schemes under high contention in Sect.~\ref{subsec:eval-retry}.

\begin{figure*}[t]
\centering
\minipage{0.32\textwidth}
\begin{subfigure}[t]{\textwidth}
\includegraphics[width=\textwidth]{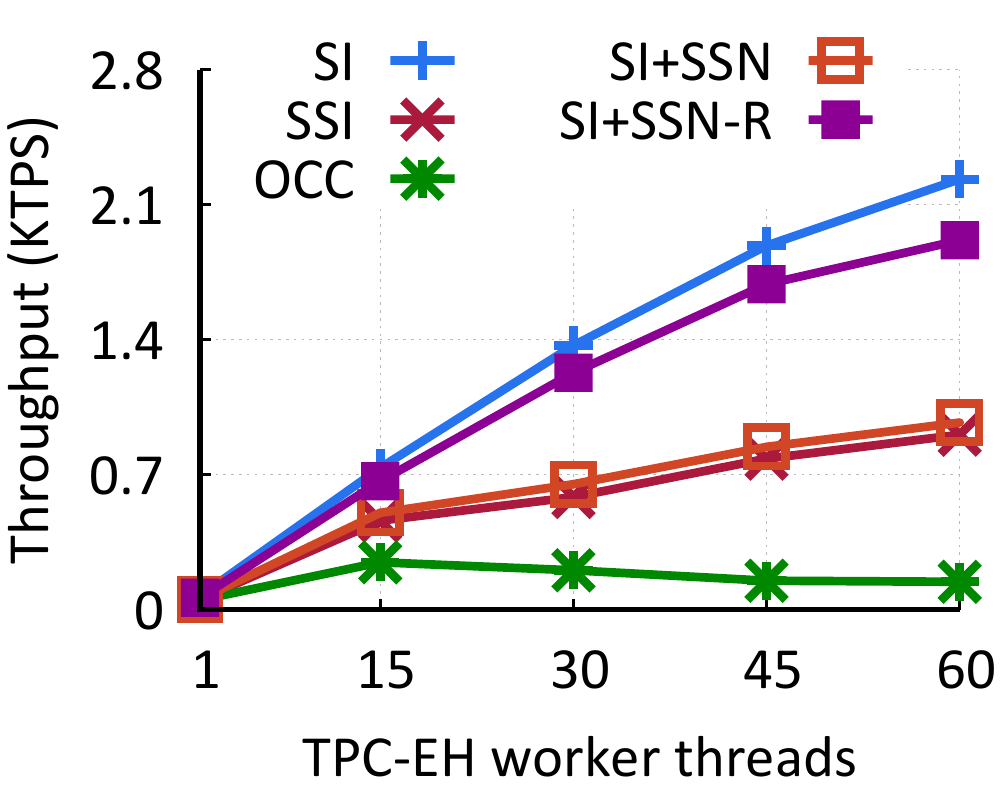}
\end{subfigure}\hfill
\caption{Commit throughput of the Asset-Eval transaction in TPC-EH.}
\label{fig:read-opt}
\endminipage\hfill
\minipage{0.655\textwidth}
\hfill\begin{subfigure}[t]{0.49\textwidth}
\includegraphics[width=\textwidth]{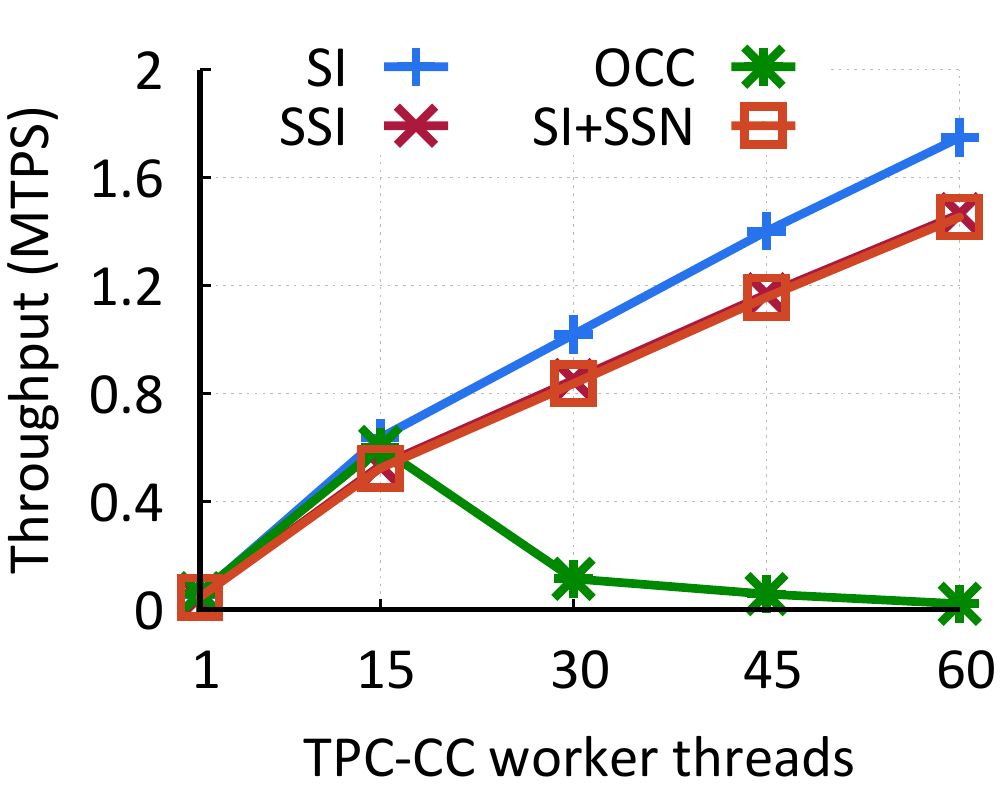}
\end{subfigure}
\begin{subfigure}[t]{0.49\textwidth}
\includegraphics[width=\textwidth]{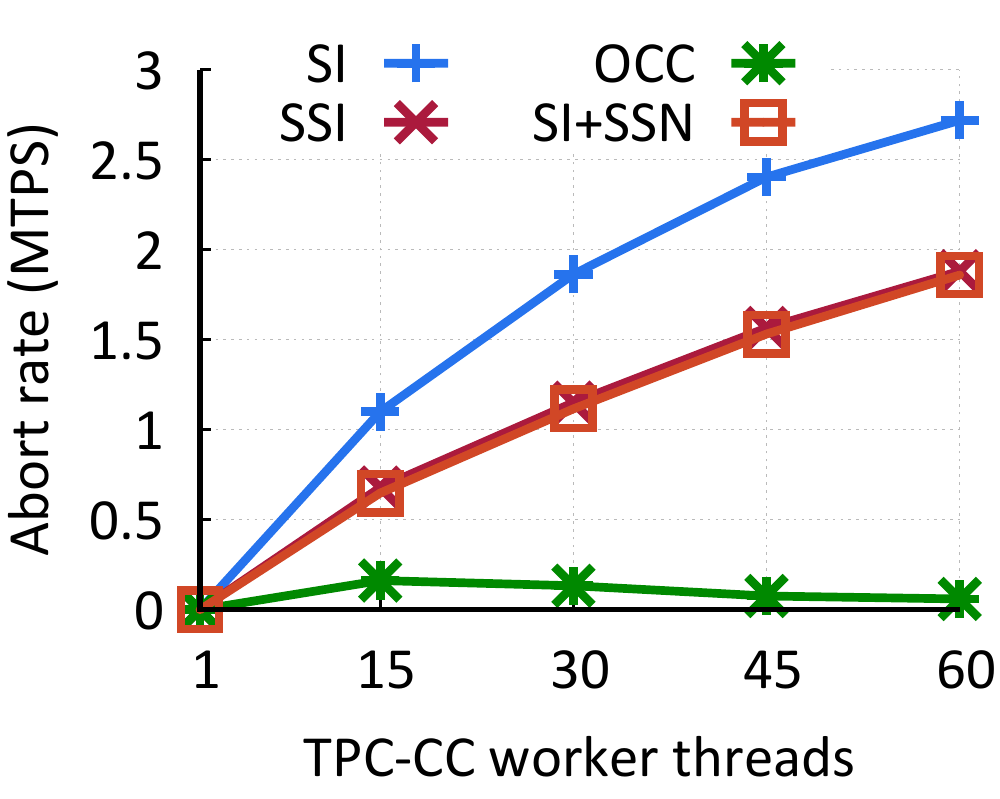}
\end{subfigure}
\caption{Commit (left) and abort (right) rates of TPC-CC. The number of
warehouses is fixed to the number of worker threads.  Aborted transactions are
retried.}
\label{fig:tpcc_contention-retry}
\endminipage
\end{figure*}

\subsection{Read-mostly transactions}
\label{subsec:eval-read-opt}
We evaluate the performance of read-mostly transactions using TPC-EH. As shown by
Fig.~\ref{fig:tpceh}(left), OCC keeps up with the other multi-version schemes
until 30 threads. With more threads, OCC's performance drops sharply, achieving
less than 20\% of SI's throughput at 60 threads. Because of its optimistic and
single-version nature, OCC does not allow any back edges in the dependency graph.
Long reads can be easily invalidated by concurrent, conflicting writers, leading
to massive aborts of read-mostly transactions. We plot the throughput of the
Asset-Eval transaction in Fig.~\ref{fig:read-opt}. In the figure, OCC showed a
declining trend after 15 threads. Although the aggregate throughput kept
increasing from 15 to 30 threads as shown in Fig.~\ref{fig:tpceh}(left), the
corresponding throughput numbers for Asset-Eval transactions in
Fig.~\ref{fig:read-opt} do not show a similar trend. In other words, OCC
processed more other transactions, and adding more workers does not help in
committing more Asset-Eval transactions.  Most of them are aborted by concurrent
updaters to the same scanned region. The throughput breakdown shown on the right
side of Fig.~\ref{fig:tpceh} aligns with this observation: OCC commits much fewer
Asset-Eval transaction than the other schemes. SI+SSN and SSI slightly deviates
from Ideal by committing $\sim$1.76\% more and $\sim$0.8\% fewer of Asset-Eval
transactions, respectively. Unlike single-version OCC, multi-versioning allows
SI-based schemes to accept all (SI) or some (deemed not harmful by SSI or SI+SSN)
back edges in the dependency graph, thus allowing more valid schedules.

As shown by Fig.~\ref{fig:tpceh}, both SI+SSN and SSI scale well under TPC-EH,
but with a widening gap between them and SI as the number of
worker threads increases.
 A similar trend is found for the Asset-Eval transaction in Fig.~\ref{fig:read-opt}.
Compared to SI, SI+SSN and SSI have to maintain a read set in each transaction
for validation at pre-commit. With more concurrent threads, the tracking and
checking of read sets imposes higher overhead. Specifically, as shown by
Algorithm~\ref{alg:ssn-parallel-commit}, SSN has to iterate over the whole read
set during pre-commit (SSI does so, too), which is a major source of last level
cache (LLC) misses. Our profiling results for a 20-second run of TPC-EH under
30 threads show that although SI+SSN's parallel commit procedure only takes
12\% of the total CPU cycles, the function alone incurs 36.25\% and 16.23\% of
LLC load and store misses, respectively. In this experiment, we have added a
new variant (SI+SSN-R) that employs the read-mostly optimizations
(Section~\ref{subsec:read-opt}). SI+SSN-R skips tracking the majority of reads,
thus avoiding most LLC misses during pre-commit. As shown in
Fig.~\ref{fig:tpceh} and~\ref{fig:read-opt}, SI+SSN-R achieves $\sim$136\% and
$\sim$97\% better performance compared to vanilla SSN, for the overall and
Asset-Eval performance of TPC-EH, respectively.

\begin{table}[t]
\setlength{\tabcolsep}{5pt}
\begin{tabular}{l|c|c|c|c}
\bf Transaction   & \bf SI   & \bf SSI & \bf SI+SSN & \bf SI+SSN-R\\\hline
Asset-Eval        & 2232.50  & 907.60  & 970.67     & 1915.35\\\hline
Broker-Volume     & 532.83   & 222.27  & 234.88     & 572.15\\\hline
Customer-Position & 899.02   & 376.98  & 399.32     & 971.44\\\hline
Market-Feed       & 49.10    & 16.68   & 16.58      & 54.92\\\hline
Market-Watch      & 1464.58  & 603.80  & 643.42     & 1588.25\\\hline
Security-Detail   & 1552.43  & 661.19  & 702.25     & 1687.43\\\hline
Trade-Lookup      & 885.77   & 359.61  & 381.04     & 959.91\\\hline
Trade-Order       & 783.42   & 333.41  & 351.53     & 851.80\\\hline
Trade-Result      & 563.76   & 211.33  & 220.02     & 607.78\\\hline
Trade-Status      & 984.51   & 405.26  & 429.60     & 1068.06\\\hline
Trade-Update      & 229.45   & 93.63   & 101.59     & 248.44\\\hline\hline
Total TPS         & 10177.37 & 4191.75 & 4450.90    & 10525.52
\end{tabular}
\caption{Throughput (TPS) of individual TPC-EH transactions under multi-version
CC schemes with 60 threads.}
\label{tbl:tpce20-breakdown}
\end{table}

Astute readers might have noticed that SI+SSN-R could even outperform SI in terms
of total commit rate as shown in Fig.~\ref{fig:tpceh}, especially when running at
60 threads. Table~\ref{tbl:tpce20-breakdown} lists the commit rates of individual
TPC-EH transactions under different CC schemes running at 60 threads. Compared to
SI, SI+SSN-R commits fewer heavy-weighted Asset-Eval and Market-Feed
transactions, leaving more resources available for other transactions. We note
that it is critical to set an appropriate threshold for SI+SSN-R, which governs
whether a version is tracked in the read set. In the case of TPC-EH, our
experiments show that using a low threshold tends to kill more Market-Feed
transactions because of conflicts in the Last-Trade table (reads from Asset-Eval,
updates from Market-Feed), leading to even higher commit rates for other
transactions. In general, under SI+SSN-R an updater that overwrote a stale
version needs to adjust the reader's \texttt{sstamp}, and the longer the reader
is, the easier can an updater catch the reader ``alive'' during the latter's
pre-commit phase. As a result, longer readers will have a higher chance of being
stamped a lower \texttt{sstamp}, making it easier to violate an exclusion window.
For the experimental results reported in this paper, we set the threshold to
\texttt{0xFFFFF}. It provided a balance between overall commit rate and fairness
among individual transactions. With a threshold of \texttt{0xFFFFF}, versions
that are not updated during the past period in which $\sim$1MB of data are
written in the database, are considered stale and consequently not tracked in the
read set.\footnote{Our current implementation accepts a user-defined,
workload-specific threshold. Self-tuning it as workload changes is future work.}
For SI+SSN-R to provide both high aggregate throughput and fair scheduling, one
must adjust the threshold depending on the workload.  In summary, as shown in
Table~\ref{tbl:tpce20-breakdown}, SI+SSN roughly follows SI's breakdown but
provides lower commit rates, while SI+SSN-R sacrifices a little fairness toward
readers but also maintains high aggregate throughput (similar to SI's) with a
proper threshold.

\begin{figure*}[t]
\centering
\minipage{0.32\textwidth}
\begin{subfigure}[t]{\textwidth}
\includegraphics[width=\textwidth]{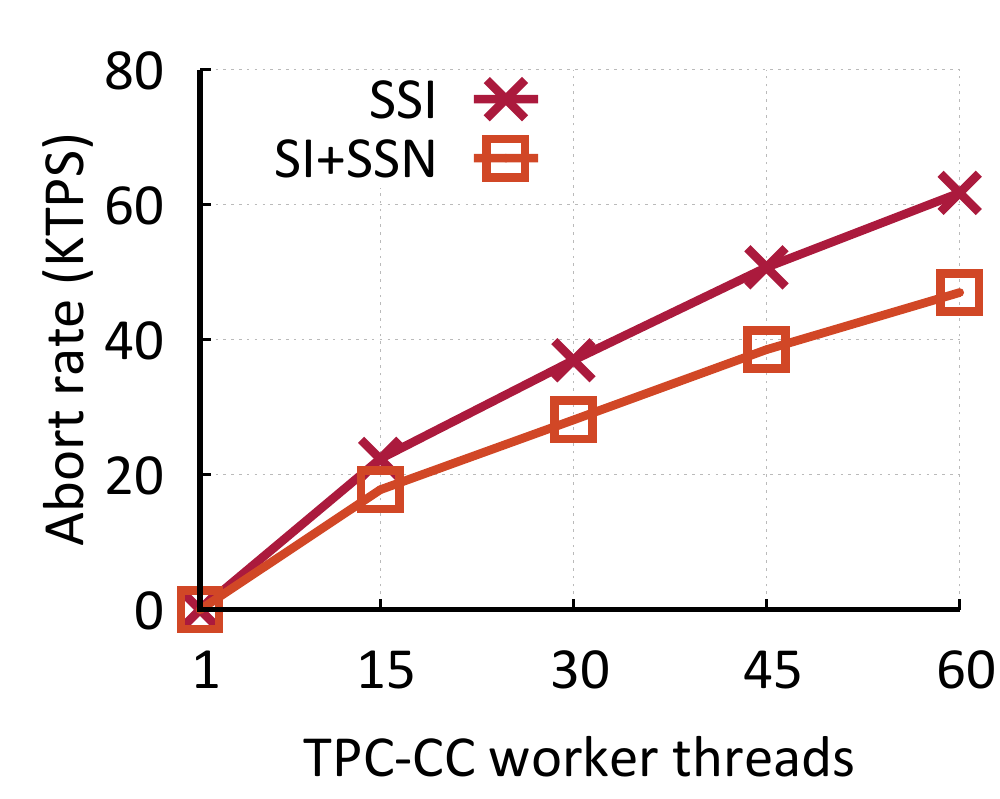}
\end{subfigure}\hfill
\caption{Abort rate of TPC-CC with retry due to SSI/SSN certification failures.}
\label{fig:tpcc_contention-serial-abort}
\endminipage\hfill
\minipage{0.655\textwidth}
\hfill\begin{subfigure}[t]{0.49\textwidth}
\includegraphics[width=\textwidth]{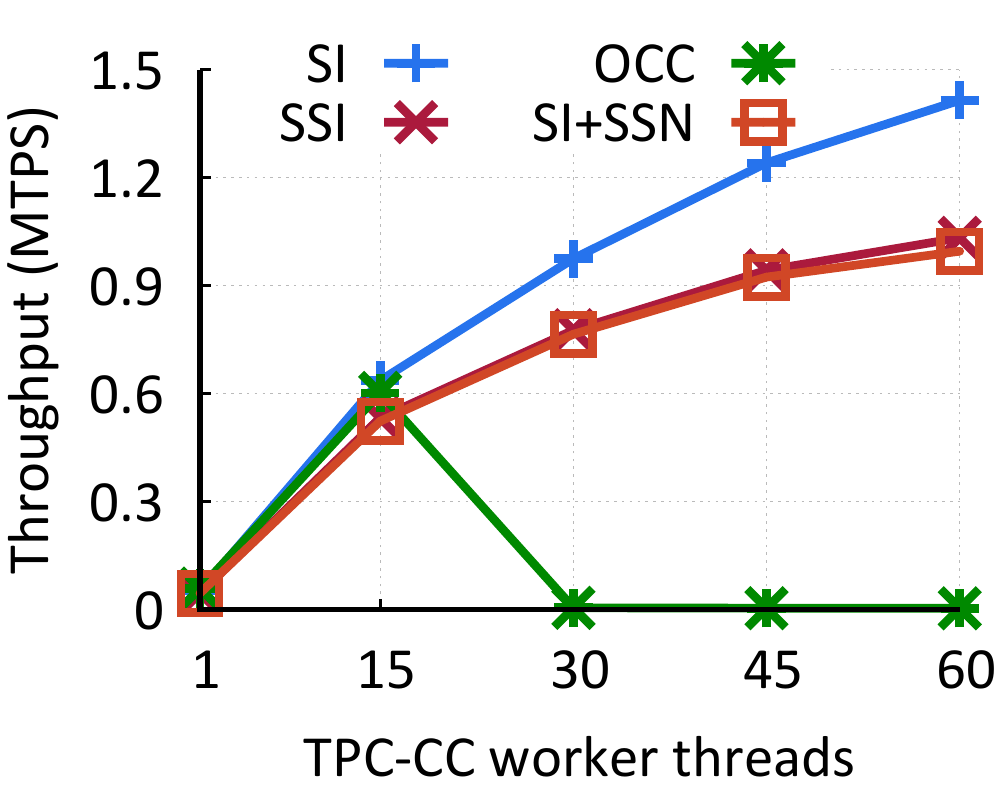}
\end{subfigure}
\begin{subfigure}[t]{0.49\textwidth}
\includegraphics[width=\textwidth]{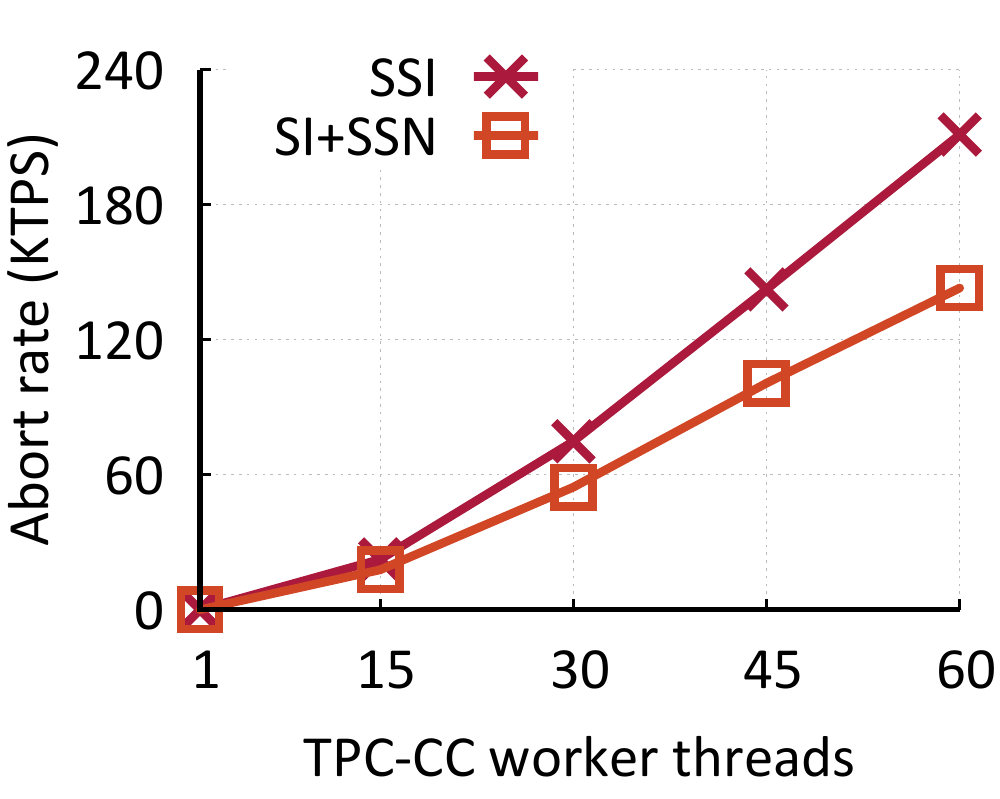}
\end{subfigure}
\caption{TPC-CC throughput (left) and abort rate due to certification failures
(right) with fixed database size (15 warehouses). Aborted transactions are
retried.}
\label{fig:tpcc_contention-15wh-retry}
\endminipage
\end{figure*}

\begin{figure}
\includegraphics[width=0.49\textwidth]{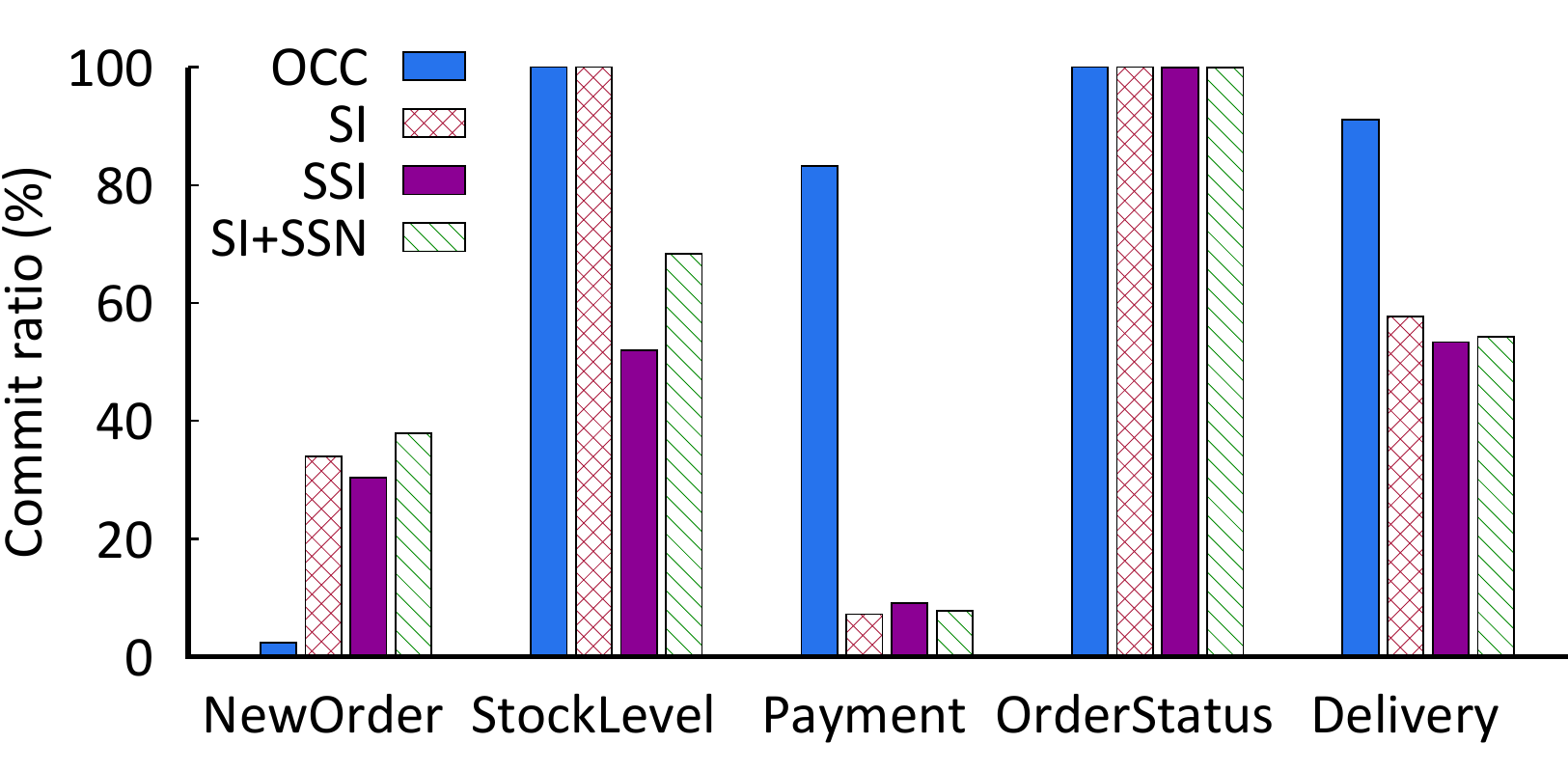}
\caption{Commit ratio of TPC-CC with 15 warehouses running at 60 threads when
aborted transactions are retried.}
\label{fig:tpcc-commit-ratio}
\end{figure}

\subsection{Safe retry and high-contention workloads}
\label{subsec:eval-retry}
This section evaluates SSN's safe retry property. We set both ERMIA and Silo to
retry aborted transactions until they commit successfully (therefore the
throughput breakdown---although not shown---will strictly follow the benchmark
specification). Fig.~\ref{fig:tpcc_contention-retry} shows the commit (left) and
abort (right) rates of TPC-CC with a varying number of concurrent worker threads.
Despite the extra effort needed to retry transactions, SSN matches the
performance of SSI and performs similarly to the case where we drop aborted
transactions (right side of Fig.~\ref{fig:tpcc}). OCC's commit rate collapsed as
core count increases, especially after 15 threads. Our profiling results show
that Silo spent the vast majority of CPU cycles (higher than 60\%) on retrying
index insertions (mostly for New-Order), minimizing the available cycles for
other transactions and getting little useful work done. Therefore, as shown by
Fig.~\ref{fig:tpcc_contention-retry}(right), OCC also kept a much lower abort
rate than SSI/SSN did. Compared to Silo, ERMIA uses indirection arrays for
multi-versioning~\cite{BwTree,ERMIA,IndArray}. This design makes tuple insertion
less reliant on index performance: Silo needs to first retry index insertion
before finalizing the tuple write at commit time, putting tremendous pressure on
the index, while ERMIA only needs to insert to the index after successfully
appended an entry in the table's indirection array, amortizing much contention on
the index. 

Although both SSI and SI+SSN commit and abort similar numbers of transactions,
they do so because of different reasons:  SSN exhibits higher accuracy and lower
abort rate due to certification failures, i.e., it aborts fewer transactions due
to serializability check than SSI does.
Fig.~\ref{fig:tpcc_contention-serial-abort} shows this effect. As we increase
the number of concurrent threads, SSI tends to abort more transactions due to
certification failures, while SI+SSN tends to abort more due to other reasons
including write-write conflicts and unsuccessful lookups. We also note that a
random backoff strategy during retries can largely mitigate this problem in Silo.
However, this will introduce a large variance in transaction latency, whereas
SSN's safe retry property can keep good performance and maintain stable latency.

In Fig.~\ref{fig:tpcc_contention-15wh-retry} we further stress the CC schemes
under high contention: the number of warehouses is fixed to 15 and we vary the
number of concurrent threads. Like previous experiments, each thread chooses a
random warehouse and retries a transaction until it is successfully committed. At
15 threads, all CC schemes exhibit exactly the same performance numbers as shown
by Fig.~\ref{fig:tpcc_contention-retry}. As we increase the number of concurrent
threads (i.e., more contention), as shown by
Fig.~\ref{fig:tpcc_contention-15wh-retry}(left), multi-version schemes can still
benefit from the increased parallelism. OCC, however, almost completely collapsed
as it keeps failing index insertions.
Fig.~\ref{fig:tpcc_contention-15wh-retry}(right) exhibits a similar but more
significant effect in Fig.~\ref{fig:tpcc_contention-serial-abort}, showing SSN's
accuracy and robustness under high contention. We also observed the similar trend
under TPC-CC without retrying aborted transactions under moderate levels of
contention (equal numbers of warehouses and concurrent threads). For example,
when running at 60 threads, SSN exhibits overall $\sim$60\% lower aborts due to
serialization failures when compared to SSI. 

Fig.~\ref{fig:tpcc-commit-ratio} shows the commit ratio for each transaction and
CC scheme running at 60 threads with 15 warehouses. The vertical axis represents
the percentage of a transaction committed out of all its retries. Similar to the
previous experiment, OCC has exhibited a bias toward the write-intensive Payment
transaction with a higher than 80\% of commit ratio, i.e., on average at most one
in five transactions needs to retry. But the percentage of committed NewOrder
transactions is only 2.36\%, due to its repetitive failure of index insertion.
For NewOrder, SI+SSN achieved the highest commit ratio, although SSI and SI+SSN
have similar aggregate throughput. Both SI and OCC have 100\% commit ratio for
the two read-only transactions (StockLevel and OrderStatus): the former does not
track and validate reads, while the latter uses a read-only snapshot for
read-only transactions. Finally, the commit ratios under SSI and SI+SSN for
StockLevel show the accuracy of SSN which achieves a 17\% higher commit ratio
when compared to SSI.

\section{Related work}
\label{sec:related-work}
An initial presentation of the SSN protocol (without some optimizations we report here) 
was in \cite{SSN-DAMON}.
This paper further extends SSN with more optimizations and functionality, 
such as a carefully designed parallel pre-commit protocol,
the adaptation of safe snapshot, phantom protection, 
optimizations for heavy-weight read-mostly transactions, etc.

The possibility of various isolation levels, some of which are not serializable,
was defined by Gray et al.~\cite{GrayLPT76}. Definitions of isolation properties
based on patterns of dependency edges are given by Adya~\cite{adya99}. Recent
efforts on CC focus on optimistic and multi-version based methods, such as OCC
defined by Kung et al.~\cite{KungR81} and SI defined academically (and proven
non-serializable) by Berenson et al.~\cite{BerensonBGMOO95}. 

Certification approaches that guarantee serializability are forms of optimistic
CC. Obtaining exact accuracy using serialization graph testing was proposed by
Casanova and Bernstein~\cite{CasanovaB80}, and extended to support
multi-versioning by Hadzilacos~\cite{Hadzilacos88}. A different approach tests
for cycles before transactions start in a real-time database system~\cite{ll00}. 
Recent OCC-based systems focus on eliminating physical contention to achieve
good performance. FOEDUS~\cite{FOEDUS} uses an extremely decentralized design
inspired by Silo~\cite{silo} to provide high performance on traditional TPC-C
like workloads. BCC~\cite{BCC} identifies specific dependency graph patterns to
reduce false aborts caused by vanilla OCC. Compared to SSN and multi-version
CC, these approaches cannot well support emerging heterogeneous workloads
featuring a significant amount of read-mostly transactions, as we have shown in
Sect.~\ref{subsec:eval-read-opt}.

The original SSI algorithm runs specifically along with SI to ensure serializable
executions~\cite{crf09}. An improved form (which we call SSI in our paper) was
implemented in PostgreSQL~\cite{pg12}. Revilak et al. proposed accompanying SI
with an exact certification using serialization graph testing~\cite{RevilakOO11}.
Other certification algorithms have been developed that can be used in a
snapshot-based system (one where all reads within a transaction come from a
common snapshot): Lomet et al.~\cite{lfw+12} choose a commit timestamp from an
allowed interval, and the chosen timestamp is the effective serial order commit
time. SSN, on the other hand, uses the commit time as timestamp, and tracks
excluded values. The chosen timestamp does not necessarily coincide with the
actual serial order commit time. Hekaton~\cite{hekaton} specifically aims for
main-memory stores and rejects all back edges. Deuteronomy~\cite{Deuteronomy}
separates physical and logical operations with dedicated data and transaction
components, and thus supports phantom protection while keeping a separation
between the CC and storage layers~\cite{Range-MVCC}. Neumann~\cite{HyperMVCC} et
al. adapts precision locking~\cite{PrecisionLocks} and uses undo buffers in
Hyper~\cite{hyper} to validate serializability.

Another class of proposals ensure serializable execution by doing static
pre-analysis of the application mix~\cite{flo+05,jfk+07,AlomariFR14}. In the
context of main-memory optimized systems, Bohm~\cite{Bohm} determines
serializable schedules prior to transaction execution, requiring that
transactions submitted in their entirety with the write sets deducible before
execution. Unlike SSN, these methods are not suitable with ad-hoc queries or data
dependent queries.

\section{Conclusions}
\label{sec:conclusion}

In this paper, we have presented the serial safety net (SSN), a cheap certifier
that can overlay a variety of concurrency control schemes and make them
serializable. We prove the correctness of SSN, we show how SSN can be efficiently
implemented for multi-version systems, and we have evaluated SSN in both
simulation and ERMIA, a recent main-memory multi-version database system designed
for modern hardware. 

SSN is robust against a variety of workloads, ranging from traditional OLTP
applications to emerging heterogeneous workloads.  In particular, we have
proposed specific optimizations for these heterogeneous workloads where long,
\textit{read-mostly} transactions are of paramount importance.  With the help of
a carefully designed lock-free, parallel commit protocol, SSN adds minimal
overhead to the underlying CC scheme in a multi-version system, in terms of
read/write tracking and commit-time validation.  Experiments using TPC-C and
TPC-E based benchmarks show that SSN is superior to prior state-of-the-art, in
being more accurate (fewer aborts and higher commit ratio), more general (not
requiring SI), more robust against retries and more friendly to emerging
heterogeneous workloads that features read-mostly transactions.

\section*{Acknowledgments}
The authors gratefully acknowledge Goetz Graefe and Harumi Kuno for their
invaluable suggestions and input they provided during the development of SSN, as
well as Kangnyeon Kim for his help building the prototype. We also thank the
anonymous reviewers for their comments and suggestions that greatly improved this
paper.

\bibliographystyle{spmpsci}      % mathematics and physical sciences
\bibliography{citations}  % sigproc.bib is the name of the Bibliography in this case

\end{document}